\documentclass[
aps,
pra,
10pt,
twocolumn,
groupedaddress,
amsmath,
amssymb,
hypertext
]{revtex4-2}

\usepackage{graphicx}

\usepackage{algorithmic}
\usepackage{xcolor}
\usepackage{mathrsfs}
\usepackage{mathtools}
\usepackage{nicefrac}
\usepackage{bbm}
\usepackage{braket}
\usepackage{booktabs}
\usepackage{placeins}
\usepackage{tikz}
\usetikzlibrary{calc,patterns,angles,quotes}
\usepackage{quantikz}

\usepackage{xcolor}
\definecolor{mygreen}{RGB}{205, 222, 194}

\DeclareMathOperator*{\argmin}{arg\,min}

\newtheorem{theorem}{Theorem}

\newtheorem{proposition}[theorem]{Proposition}
\newtheorem{corollary}[theorem]{Corollary}

\newtheorem{remark}[theorem]{Remark}

\newtheorem{problem}[theorem]{Problem}

\providecommand{\qedsymbol}{$\square$}
\providecommand{\qed}{\hfill\qedsymbol}
\newenvironment{proof}[1][Proof]{\noindent\textbf{#1.}\hspace*{0.5em}}{\qed\par\bigskip}

\newenvironment{proof-sketch}{\noindent\textbf{Proof Sketch}\hspace*{0.5em}}{\qed\bigskip}
\newenvironment{proof-idea}{\noindent\textbf{Proof Idea}\hspace*{0.5em}}{\qed\bigskip}
\newenvironment{proof-of-lemma}[1][{}]{\noindent\textbf{Proof of Lemma {#1}.}\hspace*{0.5em}}{\qed}
\newenvironment{proof-of-theorem}[1][{}]{\noindent\textbf{Proof of Theorem {#1}.}\hspace*{0.5em}}{\qed}
\newenvironment{proof-of-corollary}[1][{}]{\noindent\textbf{Proof of Corollary {#1}.}\hspace*{0.5em}}{\qed}



\usepackage{silence}
\makeatletter
\def\@bibdataout@aps{
 \immediate\write\@bibdataout{
  @CONTROL{
   apsrev42Control
   \longbibliography@sw{
    ,author="48",editor="1",pages="0",title="0",year="1"
   }{
    ,author="48",editor="1",pages="0",title="",year="1"
   }
  }
 }
 \if@filesw
  \immediate\write\@auxout{\string\citation{apsrev42Control}}
 \fi
}
\makeatother

\usepackage[
  bookmarks=true,
  bookmarksnumbered=true,
  colorlinks=true,
  linkcolor=blue,
  citecolor=blue,
  urlcolor=blue,
]{hyperref}

\begin{document}

\def\revtex{}

\title{Pauli-Structured Preconditioning for Quantum Linear System Solvers}
\author{Hantao Nie}
\email{nht@pku.edu.cn}
\affiliation{School of Mathematical Sciences, Peking University, Beijing, China}

\author{Zhijian Lai}
\email{lai\_zhijian@pku.edu.cn}
\affiliation{Beijing International Center for Mathematical Research, Peking University, Beijing, China}

\author{Dong An}
\email{dongan@pku.edu.cn}
\affiliation{Beijing International Center for Mathematical Research, Peking University, Beijing, China}

\date{\today}

\begin{abstract}
  Preconditioning is a fundamental technique for accelerating classical linear system solvers, and understanding when its benefits persist in quantum linear system (QLS) solvers is important for assessing the practical resource requirements of quantum linear algebra. In QLS algorithms, however, the potential advantage of preconditioning may be offset by the normalization overhead incurred by composing separate block-encodings of the system matrix and the preconditioner, as observed in recent work. This limitation leaves open whether additional algebraic structure can make preconditioning effective in quantum access models.
  Motivated by this question, we show that Pauli-structured representations of both the system matrix and the preconditioner allow the preconditioned operator to be accessed through regrouped Pauli expansions.
  In this setting, algebraic regrouping of Pauli products can reduce the Pauli coefficient weight of the preconditioned operator, thereby altering the normalization parameters relevant to quantum algorithms.
  We derive explicit size and coefficient-weight bounds for the regrouped Pauli representations, and we trace their consequences for both direct block-encoding constructions and randomized Pauli linear system solvers.
  These results identify when Pauli-structured preconditioning can reduce the effective complexity parameters of QLS algorithms, rather than merely improving the classical condition number.
  Numerical experiments on a finite-dimensional synthetic benchmark show reductions in norm-aware direct block-encoding diagnostics and in the randomized QLS per-sample depth proxy.
  \end{abstract}

\maketitle

\section{Introduction}
Preconditioning is a fundamental technique in numerical linear algebra for improving the conditioning of linear systems. Given $A \in \mathbb{C}^{N \times N}$ and $b \in \mathbb{C}^N$, the original system $A x=b$ is reformulated as an equivalent preconditioned system, either $PAx=Pb$ or, after the change of variables $x=P^{\dagger} y$, $PAP^{\dagger}y=Pb$, where $P \in \mathbb{C}^{N \times N}$ is chosen so that the resulting system is better conditioned.
This transformation can substantially accelerate iterative solvers~\cite{saad2003iterative,benzi2002preconditioning,grote1997parallel}. The quantum analog is the quantum linear system (QLS) problem~\cite{harrow2009quantum,childs2017quantum,low2017optimal,gilyen2019quantum,lin2020optimal,costa2022optimal,morales2024quantum}, whose complexity depends not only on the condition number but also on the oracle model used to access $A$ and the right-hand side $b$. This raises a basic question: \textit{under which access models can preconditioning genuinely reduce the query complexity of a QLS solver?}

Considerable effort has been devoted to quantum preconditioning.
Contributions include structural preconditioners tailored to specific problem classes~\cite{clader2013preconditioned,shao2018quantum,wan2018asymptotic,golden2022quantum}, matrix function and fast inversion techniques~\cite{tong2021fast}, variational formulations~\cite{hosaka2023preconditioning}, optimization-oriented pipelines~\cite{wu2024preconditioned}, annealing-based sparse approximate inverses~\cite{suresh2023sparse}, and several partial differential equation (PDE)-motivated constructions~\cite{phillips2021quantum,jin2025schrodingerization,shayegan2025quasi}.  Moreover, a comprehensive survey of the area appears in~\cite{raisuddin2024review}. A recurring observation across these lines of work is that any apparent improvement in the condition number does not automatically translate into a quantum speedup, because the quantum realization of the preconditioned operator can introduce normalization or synthesis overheads that offset the condition-number gain.

This phenomenon has recently been formalized as an obstruction in the oracle model. Lapworth and S{\"u}nderhauf~\cite{lapworth2025preconditioned} demonstrate that constructing a block-encoding of $PA$ by multiplying independent block-encodings of $A$ and $P$ can eliminate the benefit of preconditioning entirely, since the subnormalization of the product equals the product of the subnormalizations. Pechan et al.~\cite{pechan2025block} reach a comparable conclusion in a PDE setting for the three-dimensional heterogeneous Poisson operator, and advocate for a direct block-encoding of the preconditioned operator in place of a composed one. Collectively, these results establish two complementary conclusions: first, a classical reduction in the condition number is insufficient to produce a quantum advantage; second, the oracle model through which the preconditioned operator is accessed is also a decisive factor.

This access-model dependence is also consistent with recent work on quantum data access and structured block-encoding. Structured matrix input models show that both the cost and the normalization of a block-encoding may depend sensitively on the representation used for the matrix, including sparsity patterns, repeated values, and arithmetic structure~\cite{sunderhauf2024blockencoding,camps2024explicit,zhang2024circuit}. Thus, for preconditioned QLS solvers, it is important to distinguish separate access to the factors $A$ and $P$ from direct access to the classically formed product.

These observations motivate a closer examination of the interplay between the structural form of the system matrix and the oracle model through which the preconditioned operator is realized. In a wide range of applications, including discretized PDEs~\cite{childs2021high,sato2024hamiltonian,sturm2025efficient}, lattice models in condensed matter physics~\cite{lloyd1996universal,somma2002simulating}, and electronic structure Hamiltonians in quantum chemistry~\cite{jordan1928uber,mcardle2020quantum,cao2019quantum,babbush2019quantum}, the system matrix $A$ is Hermitian and admits a Pauli expansion
\begin{equation}
  \label{eq: A}
  A \;=\; \sum_{\ell=1}^{L} a_\ell\, A_\ell,
  \qquad A_\ell\in\mathcal{V}_n,\ a_\ell\in\mathbb{R},
\end{equation}
where $\mathcal{V}_n$ denotes the set of $n$-qubit Pauli words~\cite{nielsen2010quantum} and $n:=\log_2 N$, $1 \le L \le 4^n$.

Accordingly, for matrices of the form~\eqref{eq: A}, QLS algorithms may be viewed through two widely used access models.
The first is the coherent block-encoding model, in which one assumes or constructs a unitary block-encoding of the system matrix and then applies QLS procedures based on qubitization, quantum singular value transformation, or related matrix-inversion techniques~\cite{gilyen2019quantum,martyn2021grand}.
Pauli expansions provide a natural input format for this model through LCU- and qubitization-based block-encoding constructions, and explicit circuit constructions for block-encoding linear combinations of Pauli strings have been studied in recent work~\cite{berry2015simulating,gilyen2019quantum,low2019hamiltonian,zhang2024circuit,schillo2026block}.
The second is the randomized Pauli access model of Wang et al.~\cite{wang2024qubit}.
Randomized QLS solvers provide a complementary paradigm for quantum linear systems by estimating classical information about $A^{-1}b$, rather than coherently preparing the full solution state.
This makes them relevant for tasks where only selected entries, overlaps, or scalar observables of the solution are required, and where access to a Pauli expansion is more natural than access to a block-encoding oracle.
Several closely related developments clarify the scope of this model.
Low-ancilla implementations of linear combinations of unitaries, including single-ancilla and ancilla-free variants with applications to quantum linear systems, were developed independently and contemporaneously in~\cite{chakraborty2024lcu}.
More recent work pursues the same broad goal of reducing reliance on fully coherent block-encoding oracles.
Near-optimal QSVT without block-encodings and with minimal ancilla overhead was developed in~\cite{chakraborty2025qsvtwithout}, while randomized QSVT replaces block-encodings by randomized sampling and yields randomized algorithms for quantum linear systems~\cite{wang2025randomizedqsvt}.
These results place randomized Pauli-access algorithms within a broader family of oracle-light and low-ancilla approaches.
For such randomized methods, reductions in circuit depth must be assessed in conjunction with the associated sampling cost, as emphasized by recent end-to-end resource analyses of randomized QLS solvers~\cite{hariprakash2025randomized}.

Motivated by the classical technique of sparse approximate inverses (SPAI)~\cite{benzi2002preconditioning,grote1997parallel}, we further require that the preconditioner $P$ respect the same structural form, and accordingly take
\begin{equation}
  \label{eq: preconditioner}
  P \;=\; \sum_{m=1}^{M} \beta_m\, P_m,
  \qquad P_m\in\mathcal{V}_n,\ \beta_m\in\mathbb{R}.
\end{equation}
The representation~\eqref{eq: preconditioner} furnishes a Pauli-basis ansatz for the preconditioner.  When $M\ll 4^n$, it constitutes a sparse Pauli ansatz; the numerical witness in Section~\ref{sec: randomized numerical experiment}, however, employs the full diagonal family $\{I,Z\}^{\otimes n}$ and is therefore not advanced as a scalable sparse construction.  Algebraically, whenever $A$ admits the expansion~\eqref{eq: A}, the preconditioned operators $PA$ and $PAP^{\dagger}$ likewise admit Pauli expansions after regrouping identical Pauli words. The cost of forming, storing, block-encoding, or sampling from this regrouped expansion is part of the access model and is accounted for separately in the corresponding constructions. Moreover, the coefficient vector $\beta=(\beta_m)_{m=1}^{M}$ furnishes a finite-dimensional and interpretable design space, within which the preconditioner may be tuned against tractable complexity parameters, namely, the block-encoding factors in the block-encoding model and the Pauli coefficient weight in the randomized Pauli access model.
Our contributions are as follows.

\begin{itemize}
    \item We develop a Pauli-structured framework for preconditioned quantum linear system solvers.
    In a unified oracle-model formulation, we formalize the known normalization obstruction associated with composing separate block-encodings of the system matrix and the preconditioner.
    Motivated by this obstruction, and starting from Pauli expansions of the system matrix and the preconditioner, we show how the left- and symmetrically preconditioned systems can be represented directly as regrouped Pauli expansions.
    The regrouping procedure collects identical Pauli words and yields explicit bounds on the resulting list sizes and coefficient weights.
    This provides a direct access model for the preconditioned system, in which the relevant normalization and implementation costs are determined by the regrouped Pauli representation itself rather than by separate access to the individual factors.

    \item We analyze the implications of this framework for the randomized QLS solver.
    In this setting, the dominant structural depth parameter is controlled by two competing quantities: the Pauli coefficient weight of the input matrix and its smallest singular value.
    We show that Pauli-structured preconditioners can improve the randomized QLS per-sample depth proxy only when the improvement in the stability of the preconditioned system is not offset by the increase in the regrouped Pauli coefficient weight.
    This gives a concrete design principle for constructing effective Pauli-structured preconditioners in the randomized Pauli access model.

    \item We provide a numerical witness for this mechanism.
    On a synthetic Pauli-structured benchmark family, we construct a least-squares approximate inverse within a diagonal Pauli ansatz and evaluate the improvement obtained from the resulting Pauli-structured preconditioner for both direct block-encoding diagnostics and randomized QLS solvers.
    Across all tested sizes, the randomized QLS per-sample depth proxy is reduced relative to the unpreconditioned baseline, and the norm-aware effective QLS condition parameter for the directly encoded symmetric product is also reduced.
    These experiments demonstrate that Pauli regrouping after preconditioning can produce proxy-level improvements in both settings.

\end{itemize}

\subsection{Organization}

Section~\ref{sec: standard qls preconditioning} formulates the limitation of separate block-encoding composition and uses it to motivate the constructive settings considered later.
Section~\ref{sec: pauli structured data} turns to direct block-encoding of classically formed preconditioned operators from Pauli-structured data.
Section~\ref{sec: randomized qls preconditioning section} studies Pauli-structured preconditioners in the randomized Pauli access model and derives the corresponding complexity bounds.
Section~\ref{sec: randomized numerical experiment} provides a numerical illustration of the predicted improvement mechanism.

\subsection{Notation}
\label{sec: notation}

We use \(\|\cdot\|\) for the spectral norm of a matrix or the $\ell_2$ norm of a vector, depending on context. For an invertible matrix \(H\), we write \(\sigma_{\min}(H)\) for its smallest singular value and \(\kappa(H)=\|H\|/\sigma_{\min}(H)\) for its condition number. The notation \(\widetilde{\mathcal{O}}(\cdot)\) suppresses polylogarithmic factors in the displayed parameters. \(\mathcal{V}_n\) denotes the set of \(n\)-qubit Pauli words. If a matrix $H$ admits a Pauli expansion, then after collecting identical Pauli words, we write $H= \sum_{j=1}^J h_j H_j$, where $h_j \in \mathbb{C}, H_j \in \mathcal{V}_n$, and $1 \leq J \leq 4^n$. Its Pauli coefficient weight is defined as $w(H):=\sum_{j=1}^J\left|h_j\right|$. If $H$ is Hermitian, then the coefficients in the regrouped expansion over the Hermitian Pauli basis may be taken to be real.

\section{Limitations of separate block-encoding composition}
\label{sec: standard qls preconditioning}

We first isolate a limitation that does not rely on any Pauli structure.
This section delineates what can and cannot be achieved when the preconditioned operator is
realized by composing separate block-encodings of the system matrix and the
preconditioner.

Recent work has demonstrated that, in QLS solvers based on block-encodings, the effect of preconditioning depends crucially on how the
preconditioned operator is implemented.
Lapworth and S{\"u}nderhauf~\cite{lapworth2025preconditioned} compare
separate block-encoding of $A$ and $P$ followed by quantum multiplication against direct
block-encoding of the classically formed product, and observe that the former approach can forfeit its
advantage owing to subnormalization overhead.
Pechan, Golden, and O'Malley~\cite{pechan2025block} identify a critical
limitation of separately block-encoding the system matrix and the
preconditioner in a PDE setting.
The theorem below isolates this obstruction in the separate block-encoding composition model.

As in the standard QLSP formulation introduced below, we assume throughout this section that
$A \in \mathbb{C}^{N\times N}$ is Hermitian, positive semidefinite, and
normalized so that $\|A\|=1$.
These standing assumptions are consistent with the standard Hermitian formulation of the
QLSP; the obstruction proved below at the oracle-model level, however, depends only on the
smallest singular value and not on positivity.
No structural assumption on $P$ is needed here beyond the availability of a
block-encoding.

\subsection{Quantum linear system problem and its complexity}
\label{sec: qls problem and complexity}
A quantum linear system problem (QLSP) is formulated as follows.

\begin{problem}[QLSP]
  \label{problem: QLSP}
  Following the standard formulation in~\cite{morales2024quantum}, we are given an invertible matrix $A \in \mathbb{C}^{N \times N}$ and a vector $b \in \mathbb{C}^N$. Assume without loss of generality that (i) $A$ is Hermitian, positive semidefinite, and its singular values lie in $[\sigma_{\min}(A),1]$, and (ii) $\|b\|=1$. Let $x=A^{-1} b$ be the solution to the linear system. Denote the associated quantum states of $b$ and $x$ by
\[
|b\rangle=\sum_{i=1}^N b_i|i\rangle, \qquad
|x\rangle=\frac{\sum_{i=1}^N x_i|i\rangle}{\| \sum_{i=1}^N x_i|i\rangle \|}.
\]
Assuming access to $A$ and a state preparation oracle $U_b$ such that $U_b \lvert 0 \rangle = \lvert b \rangle$, the goal is to return a state $|\tilde{x}\rangle$ satisfying $\||\tilde{x}\rangle-|x\rangle \|< \varepsilon$ for a prescribed error tolerance $\varepsilon>0$.
\end{problem}

\begin{remark}
  \label{remark: qlsp psd wlog}
The assumption \(A\succeq 0\) in Problem~\ref{problem: QLSP} is an expositional simplification. For an invertible Hermitian matrix \(A\), positivity is not essential. Indeed, direct-inversion and QSVT-based QLS solvers may approximate \(1/x\) directly on the two-sided spectral domain \([-1,-\sigma_{\min}(A)]\cup[\sigma_{\min}(A),1]\). We therefore use the PSD formulation in this section only as a clean baseline; later sections may work with invertible Hermitian matrices that are not positive semidefinite.

  For a
  non-Hermitian system, one may use the standard Hermitian dilation
  $
    \begin{pmatrix}
      0 & A \\
      A^\dagger & 0
    \end{pmatrix},
  $
  which preserves the singular values of \(A\).  With this block convention, the
  embedded system corresponding to \(Ax=b\) is
  \[
    \begin{pmatrix}
      0 & A \\
      A^\dagger & 0
    \end{pmatrix}
    \begin{pmatrix}
      0\\x
    \end{pmatrix}
    =
    \begin{pmatrix}
      b\\0
    \end{pmatrix},
    \qquad x=A^{-1}b.
  \]
  Thus the desired solution appears in the lower block.  If the right-hand side
  is instead placed in the lower block, the upper block solves the corresponding
  adjoint system, so the block placement must be fixed consistently.
\end{remark}

Assume first that access to $A$ is given by a $(1,a_A,\varepsilon_A)$
block-encoding $U_A$.
We model the cost of the QLS solver purely by the number of oracle calls to
$U_A$ and $U_b$.
Let $\kappa_A = \frac{1}{\sigma_{\min}(A)}$.
Then the query complexity of a general QLS solver for
Problem~\ref{problem: QLSP} may be written as
\begin{equation}
  \label{eq: QLS complexity}
  \mathrm{T}_A\!\left(\kappa_A,\frac{1}{\varepsilon_A},\frac{1}{\varepsilon}\right), \quad
  \mathrm{T}_b\!\left(\kappa_A,\frac{1}{\varepsilon_A},\frac{1}{\varepsilon}\right)
\end{equation}
for $U_A$ and $U_b$, respectively. $\mathrm{T}_A$ and $\mathrm{T}_b$ are monotonically increasing in their arguments.

If instead access to $A$ is given by an $(\alpha_A,a_A,\varepsilon_A)$
block-encoding with $\alpha_A \ge 1$, then the same unitary may be viewed as a
$(1,a_A,\frac{\varepsilon_A}{\alpha_A})$ block-encoding of the rescaled matrix
$A' := \frac{A}{\alpha_A}$.
The singular values of $A'$ lie in
$[\frac{\sigma_{\min}(A)}{\alpha_A},\frac{1}{\alpha_A}] \subseteq [\frac{\sigma_{\min}(A)}{\alpha_A},1]$.
Therefore the query complexity of the
linear system solver becomes
\begin{equation}
  \label{eq: QLS complexity 2}
  \mathrm{T}_A\!\left(\alpha_A \kappa_A,\frac{\alpha_A}{\varepsilon_A},\frac{1}{\varepsilon}\right), \quad
  \mathrm{T}_b\!\left(\alpha_A \kappa_A,\frac{\alpha_A}{\varepsilon_A},\frac{1}{\varepsilon}\right)
\end{equation}
for $U_A$ and $U_b$, respectively.
For notational convenience, for any invertible matrix $A$ accessed through an $(\alpha_A, a_A, \varepsilon_A)$ block-encoding, we define the effective QLS condition parameter associated with the block-encoding normalization as
\begin{equation}
  \label{eq: effective qls condition parameter}
  \kappa_{\mathrm{eff}}(A;\alpha_A)
  :=
  \frac{\alpha_A}{\sigma_{\min}(A)}.
\end{equation}
It coincides with the usual condition number of \(A\), or equivalently of \(\frac{A}{\alpha_A}\), only when the normalization is tight, namely when \(\alpha_A=\|A\|\).

\subsection{Preconditioned systems from separate block-encodings}
\label{sec: preconditioned linear systems}
In this subsection, we consider a general QLS solver given an
$(\alpha_A,a_A,\varepsilon_A)$ block-encoding $U_A$ of the system matrix $A$,
with $\|A\| \le 1$ and $\alpha_A \ge 1$, together with a state preparation
oracle $U_b$ such that $U_b\lvert 0\rangle=\lvert b\rangle$.
Let $\varepsilon>0$ be the target accuracy, and use the notation from Section~\ref{sec: qls problem and complexity}.

Assume that we are given an $(\alpha_P,a_P,\varepsilon_P)$ block-encoding $U_P$ of
the preconditioner $P$.
In the limitation result below, the normalization factors are used as certified
upper bounds: \(\alpha_A\ge\|A\|\) and \(\alpha_P\ge\|P\|\).  The first condition
holds in the present normalization because \(\|A\|\le1\) and \(\alpha_A\ge1\).
The second condition is automatic for exact block-encodings, but for approximate
block-encodings it should be imposed as an independent normalization certificate.
If one only knows an approximate block-encoding in the sense that its encoded
block \(B_P\) satisfies \(\|P-\alpha_P B_P\|\le\varepsilon_P\), then one has only
\(\|P\|\le\alpha_P+\varepsilon_P\), and the proof below gives the corresponding
slackened bound with this replacement.
For the transformed linear system to be equivalent to the original one, one
normally assumes that $P$ is invertible; however, the lower bound at the oracle-model level
proved below does not require invertibility of $P$.
We treat left and symmetric preconditioning in a unified way.

For left preconditioning, the transformed system is
\[
PAx = Pb.
\]
Even when both $P$ and $A$ are Hermitian, the product $PA$ need not be
Hermitian unless $P$ and $A$ commute.
Accordingly, whenever a standard Hermitian QLS solver is invoked for the
left preconditioned system, one should either restrict to a Hermitian $PA$ or
pass to the standard Hermitian embedding of $PA$.
This distinction does not affect the singular value inequalities underlying the
limitation result.

For symmetric preconditioning, the transformed system is
\[
PAP^\dagger y = Pb,
\]
with the original solution recovered as $x=P^\dagger y$.
Since $A$ is Hermitian, $PAP^\dagger$ is Hermitian for every $P$.

\paragraph{Block-encoding of the preconditioned operator.}
For left preconditioning, the product unitary $U_PU_A$ is an
$(\alpha_L,a_L,\varepsilon_L)$ block-encoding of $PA$, where
\[
\alpha_L=\alpha_P\alpha_A, \qquad a_L=a_P+a_A.
\]
The block-encoding error follows from the standard multiplication lemma for
block-encodings, for example Lemma~48 of~\cite{gilyen2019quantum}.
Applying that lemma to an $(\alpha_P,a_P,\varepsilon_P)$ block-encoding of $P$
and an $(\alpha_A,a_A,\varepsilon_A)$ block-encoding of $A$ gives
\[
\varepsilon_L
= \alpha_P\varepsilon_A + \alpha_A\varepsilon_P + \varepsilon_P\varepsilon_A.
\]
Equivalently, after dividing by $\alpha_L=\alpha_P\alpha_A$, this gives a
$(1,a_L,\frac{\varepsilon_L}{\alpha_L})$ block-encoding of the normalized matrix
$
A_L := \frac{1}{\alpha_L}PA,
$
with normalized error
\[
\frac{\varepsilon_L}{\alpha_L}
= \frac{\varepsilon_A}{\alpha_A}
  + \frac{\varepsilon_P}{\alpha_P}
  + \frac{\varepsilon_P\varepsilon_A}{\alpha_P\alpha_A}.
\]

For symmetric preconditioning, first regard $U_PU_A$ as an
$(\alpha_L,a_L,\varepsilon_L)$ block-encoding of $PA$, and then multiply by
$U_P^\dagger$.
The resulting unitary
$
U_{PAP^\dagger}:=U_PU_AU_P^\dagger
$
is an $(\alpha_S,a_S,\varepsilon_S)$ block-encoding of $PAP^\dagger$, where
\[
\alpha_S=\alpha_P^2\alpha_A, \qquad
 a_S=a_A+2a_P,
\]
and another application of the same multiplication lemma yields
\[
\varepsilon_S
= \alpha_P\varepsilon_L + \alpha_L\varepsilon_P + \varepsilon_L\varepsilon_P.
\]
Equivalently, this gives a $(1,a_S,\frac{\varepsilon_S}{\alpha_S})$ block-encoding of
the normalized matrix
$
A_S := \frac{1}{\alpha_S}PAP^\dagger,
$
with normalized error
\[
\frac{\varepsilon_S}{\alpha_S}
= \frac{\varepsilon_L}{\alpha_L}
  + \frac{\varepsilon_P}{\alpha_P}
  + \frac{\varepsilon_L\varepsilon_P}{\alpha_P\alpha_L}.
\]

If the solver returns a normalized state \(|\bar y\rangle=\frac{y}{\|y\|}\) proportional
to the solution \(y\) of the symmetric preconditioned system, then to obtain a
state proportional to \(x=P^\dagger y\) we apply \(U_P^\dagger\) and amplify the
corresponding block.
This yields a state preparation oracle for
$
|x\rangle := \frac{P^\dagger y}{\|P^\dagger y\|}
$
with query overhead
\[
\mathrm{T}_x
=
\mathcal{O}\!\left(
  \frac{\alpha_P}{\|P^\dagger|\bar y\rangle\|}
  \log\frac{1}{\varepsilon}
\right).
\]

Thus, in both cases, the relevant effective QLS condition parameter is determined by
the smallest singular value of the normalized preconditioned operator.
Let
\[
\kappa_L := \kappa_{\mathrm{eff}}(PA;\alpha_L), \qquad
\kappa_S := \kappa_{\mathrm{eff}}(PAP^\dagger;\alpha_S).
\]
If the corresponding preconditioned operator is singular, we set the associated
effective QLS condition parameter to \(+\infty\).  With this convention, by
Theorem~\ref{thm:no-go-preconditioning}, neither $\kappa_L$ nor $\kappa_S$
can be smaller than the baseline value $\alpha_A\kappa_A$.
This is a statement at the oracle-model level about the effective QLS condition parameter alone.
The solver-dependent consequences are recorded separately below.

\paragraph{Preparation of the preconditioned right-hand side.}
Given $U_b\,|0\rangle = |b\rangle$, in both settings we need a state
proportional to $Pb$.
This can be obtained by applying $U_P$ to $|0\rangle|b\rangle$ and
postselecting, or coherently amplifying, the ancilla outcome corresponding to
the encoded block.
Using oblivious amplitude amplification~\cite{gilyen2019quantum}, one can implement a state preparation
oracle for
$
\frac{Pb}{\|Pb\|}
$
with query overhead
$
  \mathrm{T}_r = \mathcal{O}\!\left(\frac{\alpha_P}{\|Pb\|}\,\log\frac{1}{\varepsilon}\right).
$
Therefore, the query complexity with respect to $U_b$ (or its adjoint) is
\[
  \mathrm{T}_r\mathrm{T}_b\!\left(\kappa_L, \frac{\alpha_L}{\varepsilon_L}, \frac{1}{\varepsilon}\right), \quad
  \mathrm{T}_r \mathrm{T}_b\!\left(\kappa_S, \frac{\alpha_S}{\varepsilon_S}, \frac{1}{\varepsilon}\right)
\]
for left preconditioning and symmetric preconditioning, respectively.

The key point is that, in the separate block-encoding composition model, improving the classical
conditioning of $PA$ or $PAP^\dagger$ is not by itself sufficient.
The relevant QLS parameter also includes the normalization inherited from the
available block-encodings.

\begin{theorem}
  \label{thm:no-go-preconditioning}
  (\textbf{Limitation of separate block-encoding composition})
  Assume access to the system matrix $A$ via an $(\alpha_A,a_A,\varepsilon_A)$ block-encoding $U_A$, with $\|A\|\le 1$, $\alpha_A\ge1$, and $\kappa_A:=\frac{1}{\sigma_{\min}(A)}$. Let $P$ be any preconditioner for which we have an $(\alpha_P,a_P,\varepsilon_P)$ block-encoding $U_P$, and suppose that the normalization of the preconditioner is certified in the sense that $\alpha_P\ge\|P\|$.
This includes exact block-encodings and approximate block-encodings accompanied by an independently certified normalization bound.
If a preconditioned operator is singular, its effective QLS condition parameter is
understood as \(+\infty\).  Then the effective QLS condition parameters satisfy
$\kappa_L \ge \alpha_A\kappa_A$ and $\kappa_S \ge \alpha_A\kappa_A$.
\end{theorem}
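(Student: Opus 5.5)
The plan is to reduce both claims to the standard singular-value inequality for products, $\sigma_{\min}(XY)\le \|X\|\,\sigma_{\min}(Y)$, and then use the certified normalizations. By definition~\eqref{eq: effective qls condition parameter}, $\kappa_L=\alpha_P\alpha_A/\sigma_{\min}(PA)$ and $\kappa_S=\alpha_P^2\alpha_A/\sigma_{\min}(PAP^\dagger)$. The baseline is $\alpha_A\kappa_A=\alpha_A/\sigma_{\min}(A)$. It therefore suffices to show
\[
\sigma_{\min}(PA)\le \alpha_P\,\sigma_{\min}(A),\qquad \sigma_{\min}(PAP^\dagger)\le \alpha_P^2\,\sigma_{\min}(A).
\]
If either preconditioned operator is singular, its parameter is $+\infty$ by convention and the claim is trivial. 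We may therefore assume both operators are invertible.

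\textbf{Step 1: product inequality.} Write $\sigma_{\min}(H)=\min_{\|v\|=1}\|Hv\|$ for square $H$. Choose a unit vector $v$ with $\|Av\|=\sigma_{\min}(A)$. Then $\sigma_{\min}(PA)\le\|PAv\|\le\|P\|\,\|Av\|=\|P\|\sigma_{\min}(A)$. Combining this with the certification $\|P\|\le\alpha_P$ gives the left-preconditioned bound and hence $\kappa_L\ge\alpha_A\kappa_A$.

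\textbf{Step 2: right multiplication.} For the symmetric case we also need $\sigma_{\min}(XY)\le\sigma_{\min}(X)\|Y\|$. This follows from Step 1 applied to adjoints, since $\sigma_{\min}(H)=\sigma_{\min}(H^\dagger)$ for square matrices: $\sigma_{\min}(XY)=\sigma_{\min}(Y^\dagger X^\dagger)\le\|Y^\dagger\|\sigma_{\min}(X^\dagger)$. Applying this with $X=PA$ and $Y=P^\dagger$, and using $\|P^\dagger\|=\|P\|\le\alpha_P$, gives
\[
\sigma_{\min}(PAP^\dagger)\le\alpha_P\,\sigma_{\min}(PA)\le\alpha_P^2\,\sigma_{\min}(A).
\]
Hence $\kappa_S\ge\alpha_A\kappa_A$.

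\textbf{Remarks and obstacle.} No step is genuinely hard. The only point needing care is that the normalizations are used purely as upper bounds $\alpha_P\ge\|P\|$, rather than through the encoded blocks themselves. For approximate block-encodings without a certificate, the same argument yields the slackened bound with $\alpha_P+\varepsilon_P$ in place of $\alpha_P$. Neither positivity of $A$ nor invertibility of $P$ is used, consistent with the remarks preceding the theorem. One should also note that a singular $A$ would make $\kappa_A=\infty$; the standing invertibility assumption excludes this case.
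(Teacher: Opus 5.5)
Your proof is correct and follows the paper's argument: both reduce the claims to $\sigma_{\min}(PA)\le\|P\|\,\sigma_{\min}(A)$ and $\sigma_{\min}(PAP^\dagger)\le\|P\|^2\sigma_{\min}(A)$, then invoke the certified bound $\alpha_P\ge\|P\|$. You also justify these product inequalities, via the variational characterization of $\sigma_{\min}$ and the adjoint argument, which the paper uses without proof; writing them as $\sigma_{\min}(PA)\le\alpha_P\,\sigma_{\min}(A)$ avoids the paper's implicit division by $\|P\|$.
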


\begin{proof}
  If $PA$ or $PAP^\dagger$ is singular, the corresponding inequality is true by
  the convention above.  We therefore prove the claims in the nonsingular cases.
  By the certified normalization assumption, \(\alpha_P\ge \|P\|\).
  For left preconditioning, use $\sigma_{\min}(PA) \le \|P\|\,\sigma_{\min}(A)$ to obtain
  \[
    \kappa_L=\frac{\alpha_P\alpha_A}{\sigma_{\min}(PA)}
    \ge \frac{\alpha_P\alpha_A}{\|P\|\sigma_{\min}(A)}
    \ge \frac{\alpha_A}{\sigma_{\min}(A)}
    = \alpha_A\kappa_A.
  \]
  For symmetric preconditioning, it follows that
  \[
    \kappa_S
    = \frac{\alpha_P^2\alpha_A}{\sigma_{\min}(PAP^\dagger)}
    \ge \frac{\alpha_P^2\alpha_A}{\|P\|^2\sigma_{\min}(A)}
    \ge \frac{\alpha_A}{\sigma_{\min}(A)}
    = \alpha_A\kappa_A.
  \]
\end{proof}

\begin{corollary}
\label{cor: no-go-condition-number-contribution}
Consider any QLS solver whose oracle query complexity is monotone nondecreasing in the effective QLS condition parameter. Then, within the separate block-encoding composition model, preconditioning cannot improve the contribution to the query complexity coming from the effective QLS condition parameter.
\end{corollary}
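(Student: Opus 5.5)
The corollary follows from Theorem~\ref{thm:no-go-preconditioning} together with the monotonicity assumption on the solver's complexity. We compare, for a fixed solver, the query counts it incurs on the unpreconditioned and preconditioned problems, isolating the dependence on the effective QLS condition parameter.

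\textbf{Setup.} Following \eqref{eq: QLS complexity 2}, the solver applied to $A$ through $U_A$ uses $\mathrm{T}_A(\alpha_A\kappa_A,\cdot,\cdot)$ queries to the matrix oracle and $\mathrm{T}_b(\alpha_A\kappa_A,\cdot,\cdot)$ queries to the state preparation oracle. The first argument, $\alpha_A\kappa_A=\kappa_{\mathrm{eff}}(A;\alpha_A)$, is the contribution of the effective QLS condition parameter.

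In the separate block-encoding composition model, the matrix oracle becomes $U_PU_A$ (left) or $U_PU_AU_P^\dagger$ (symmetric). The query counts then take the same functional form, with first argument $\kappa_L$ or $\kappa_S$ respectively. The remaining arguments (normalized error, target accuracy) and the extra factor $\mathrm{T}_r$ for preparing $Pb$ are separate contributions. The corollary makes no claim about them, and in any case they do not help: each factor is at least one.

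\textbf{Comparison.} Fix all arguments except the effective condition parameter. By Theorem~\ref{thm:no-go-preconditioning}, $\kappa_L\ge\alpha_A\kappa_A$ and $\kappa_S\ge\alpha_A\kappa_A$, where the value $+\infty$ is allowed in the singular case. Since $\mathrm{T}_A$ and $\mathrm{T}_b$ are monotone nondecreasing in their first argument, we get
\[
\mathrm{T}_A(\kappa_{L/S},\cdot,\cdot)\ge \mathrm{T}_A(\alpha_A\kappa_A,\cdot,\cdot),
\]
and the same inequality for $\mathrm{T}_b$. The singular case, where the parameter is $+\infty$, is read as an unbounded or undefined cost and so is trivially no improvement.

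\textbf{Main obstacle.} The only delicate point is stating precisely what ``the contribution coming from the effective QLS condition parameter'' means, since the other arguments also change under preconditioning. My plan is to define this contribution as the query count viewed as a function of the first argument with the others held fixed. The conclusion then follows immediately from the theorem and monotonicity, and no further estimates are required.
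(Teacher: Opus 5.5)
Your argument is correct and follows the paper's proof: you apply Theorem~\ref{thm:no-go-preconditioning} to get $\kappa_L,\kappa_S\ge\alpha_A\kappa_A$, then use monotonicity in the first argument with the remaining arguments held fixed, which is how the paper states the result precisely in the displayed inequalities after the corollary. You should drop the aside that the other arguments ``do not help'': ``each factor is at least one'' only covers $\mathrm{T}_r$, and in the paper's abstract model the precision argument moves the other way, since $\frac{\varepsilon_L}{\alpha_L}\ge\frac{\varepsilon_A}{\alpha_A}$ gives $\frac{\alpha_L}{\varepsilon_L}\le\frac{\alpha_A}{\varepsilon_A}$. This is why the paper restricts the conclusion to the condition-parameter contribution.
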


\begin{proof}
By Theorem~\ref{thm:no-go-preconditioning}, one has $\kappa_L \ge \alpha_A\kappa_A$ and $\kappa_S \ge \alpha_A\kappa_A$. Therefore, for any solver whose query complexity is monotone nondecreasing in the effective QLS condition parameter, replacing $A$ by $PA$ or $PAP^\dagger$ in the separate block-encoding composition model cannot reduce the part of the complexity that depends on this parameter.
\end{proof}

For the particular abstract complexity model in \eqref{eq: QLS complexity}--\eqref{eq: QLS complexity 2}, this yields the
solver-dependent inequalities
\begin{equation}
\begin{aligned}
\mathrm{T}_A\!\left(\kappa_L,\frac{\alpha_L}{\varepsilon_L},\frac{1}{\varepsilon}\right)
&\ge
\mathrm{T}_A\!\left(\alpha_A\kappa_A,\frac{\alpha_L}{\varepsilon_L},\frac{1}{\varepsilon}\right), \\
\mathrm{T}_r\,\mathrm{T}_b\!\left(\kappa_L,\frac{\alpha_L}{\varepsilon_L},\frac{1}{\varepsilon}\right)
&\ge
\mathrm{T}_r\,\mathrm{T}_b\!\left(\alpha_A\kappa_A,\frac{\alpha_L}{\varepsilon_L},\frac{1}{\varepsilon}\right),
\end{aligned}
\end{equation}
for left preconditioning, with analogous inequalities for symmetric preconditioning obtained by replacing $(\kappa_L,\alpha_L,\varepsilon_L)$ by $(\kappa_S,\alpha_S,\varepsilon_S)$. These inequalities isolate only the effective QLS condition parameter contribution. The full query complexity also depends on the precision parameters $\frac{\alpha_L}{\varepsilon_L}$, $\frac{\alpha_S}{\varepsilon_S}$, and on the right-hand side preparation and recovery overheads $\mathrm{T}_r$ and $\mathrm{T}_x$.

\begin{remark}
  \label{remark: scope-no-go}
Theorem~\ref{thm:no-go-preconditioning} is consistent with the negative message in \cite{lapworth2025preconditioned} for quantum multiplication of separately encoded factors, and also with the limitations of separate block-encoding composition emphasized in \cite{pechan2025block} for separately block-encoding a system matrix and a preconditioner. In all of these cases, the normalization or subnormalization overhead can offset the reduction in the classical condition number at the level of the effective QLS condition parameter.

At the same time, Theorem~\ref{thm:no-go-preconditioning} does not rule out a different regime in which the classically formed product $PA$ or $PAP^\dagger$ is block-encoded directly as a new matrix. Therefore, the theorem should be interpreted as a limitation of composition of separate block-encodings, rather than as a general impossibility result for all forms of quantum preconditioning. Compared with~\cite{lapworth2025preconditioned}, Theorem~\ref{thm:no-go-preconditioning} provides a self-contained oracle-model lower bound valid for any preconditioner $P$ (without commutativity or positivity assumptions) and treats left and symmetric preconditioning in a unified way via the certified normalization condition $\alpha_P\ge\|P\|$.
\end{remark}

\section{Pauli-structured preconditioners with direct block-encoding}

\label{sec: pauli structured data}

The limitation result in Section~\ref{sec: standard qls preconditioning} applies to a
specific access model: the preconditioned operator is realized by multiplying
separate block-encodings of \(P\) and \(A\).  We now consider a different
access model.  The product $Q\in\{PA, PAP^\dagger\}$ is first formed classically from the Pauli descriptions of \(A\) and \(P\), and
the resulting operator \(Q\) is then block-encoded directly.  This section makes precise when the direct construction avoids the
normalization penalty of separate block-encoding composition and what additional
costs must enter the comparison.

Throughout this section, assume that \(A\) has the Pauli expansion~\eqref{eq: A}, and that the preconditioner has the Pauli form~\eqref{eq: preconditioner}.  We write
\[
  w(A):=\sum_{\ell=1}^{L}|a_\ell|,
  \qquad
  w(P):=\sum_{m=1}^{M}|\beta_m|,
\]
for the Pauli coefficient weights as defined in Section~\ref{sec: notation}.

\paragraph{Admissible preconditioned systems.}
For the left preconditioned system
\[PAx=Pb,\]
equivalence with the original
linear system requires \(P\) to be invertible.  If a Hermitian QLS solver is used
directly on \(PA\), then \(PA\) must also be Hermitian.  Since \(A\) and \(P\)
are Hermitian, this is equivalent to
$
  [P,A]=0.
$
If the solver formulation additionally assumes positive definiteness, it is
enough to assume \(A\succ0\), \(P\succ0\), and \([P,A]=0\), in which case
\(PA\succ0\).  If these assumptions are not imposed, one should consider the following symmetric preconditioned system
\[
  PAP^\dagger y=Pb,
\]
with the substitution \(x=P^\dagger y\).
Equivalence of the symmetric system with the original system also requires
\(P\) to be invertible; equivalently, since \(A\) is invertible, it is enough to
assume that \(PAP^\dagger\) is invertible.  The operator \(PAP^\dagger\) is
Hermitian whenever \(A\) is Hermitian.  The following proposition gives an
immediate implication for the Pauli coefficient weight.

\begin{proposition}
  \label{prop: pauli-closure-direct-preconditioning}
  Let \(A\) and \(P\) be given by~\eqref{eq: A} and~\eqref{eq: preconditioner}.
  Then the following statements hold.

  \begin{enumerate}
    \item If \(PA\) is Hermitian, then after regrouping identical Pauli words,
    $
      PA=\sum_{j=1}^{J_L}\mu_j^{(L)}Q_j^{(L)},
      J_L\le LM,
    $
    with \(Q_j^{(L)}\in\mathcal{V}_n\) and \(\mu_j^{(L)}\in\mathbb{R}\).  Its
    Pauli coefficient weight satisfies
    $
      w(PA)\le w(P) w(A).
    $

    \item The symmetric product admits a Pauli expansion
    $
      PAP^\dagger=\sum_{j=1}^{J_S}\mu_j^{(S)}Q_j^{(S)},
      J_S\le LM^2,
    $
    with \(Q_j^{(S)}\in\mathcal{V}_n\) and \(\mu_j^{(S)}\in\mathbb{R}\).  Its
    Pauli coefficient weight satisfies
    $
      w(PAP^\dagger)\le w(P)^2 w(A).
    $
  \end{enumerate}
\end{proposition}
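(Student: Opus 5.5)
The plan is to expand the products term by term, use closure of Pauli words under multiplication up to a phase, and then regroup with the triangle inequality. The only delicate point is the reality of the coefficients.

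\emph{Step 1 (Pauli product rule).} For any $U,V\in\mathcal{V}_n$ there exist a phase $\omega(U,V)\in\{\pm1,\pm i\}$ and a word $W(U,V)\in\mathcal{V}_n$ with $UV=\omega(U,V)\,W(U,V)$. This follows qubitwise from the single-qubit relations $XY=iZ$, $X^2=I$, and so on. Also, every $W\in\mathcal{V}_n$ is Hermitian.

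\emph{Step 2 (left product).} Substituting \eqref{eq: A} and \eqref{eq: preconditioner} gives
\[
PA=\sum_{m=1}^{M}\sum_{\ell=1}^{L}\beta_m a_\ell\,\omega(P_m,A_\ell)\,W(P_m,A_\ell).
\]
This is a list of $LM$ terms. Collecting identical words $W$ yields at most $LM$ distinct words, so $J_L\le LM$. Each regrouped coefficient is $\mu_j^{(L)}=\sum_{(m,\ell):W(P_m,A_\ell)=Q_j^{(L)}}\beta_m a_\ell\,\omega(P_m,A_\ell)$. By the triangle inequality and $|\omega|=1$,
\[
\sum_j|\mu_j^{(L)}|\le\sum_{m,\ell}|\beta_m||a_\ell|=w(P)\,w(A).
\]
Cancellations can only help.

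\emph{Step 3 (reality of the coefficients).} This is the step that needs care. A priori each $\mu_j^{(L)}$ is complex. The Pauli words form an orthogonal basis of $\mathbb{C}^{N\times N}$ under the Hilbert--Schmidt inner product, so the regrouped expansion is unique and $\mu_j=\operatorname{tr}(Q_jH)/N$. If $H$ is Hermitian, then $\overline{\mu_j}=\operatorname{tr}(H^\dagger Q_j^\dagger)/N=\operatorname{tr}(HQ_j)/N=\mu_j$, which is real. Hence, under the hypothesis that $PA$ is Hermitian, all $\mu_j^{(L)}$ are real. This matches the convention stated in Section~\ref{sec: notation}. The imaginary parts arising from anticommuting pairs must therefore cancel exactly in the regrouping. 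That cancellation is automatic from uniqueness, so no explicit bookkeeping is needed.

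\emph{Step 4 (symmetric product).} Since each $P_m$ is Hermitian and $\beta_m$ is real, $P^\dagger=P$. Expanding gives
\[
PAP^\dagger=\sum_{m,\ell,m'}\beta_m a_\ell\beta_{m'}\,P_mA_\ell P_{m'},
\]
with $LM^2$ terms. Applying Step 1 twice, each product $P_mA_\ell P_{m'}$ is a unimodular phase times a single Pauli word. Regrouping then gives $J_S\le LM^2$ and, by the same triangle inequality, $w(PAP^\dagger)\le w(P)^2w(A)$. Here $PAP^\dagger$ is always Hermitian because $A$ is Hermitian, so Step 3 gives $\mu_j^{(S)}\in\mathbb{R}$ with no extra hypothesis.

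One could sharpen the count by pairing the $(m,m')$ and $(m',m)$ terms, but this is not needed for the stated bound.
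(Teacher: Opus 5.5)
Your proposal is correct and follows essentially the same route as the paper: you write each product $P_mA_\ell$ or $P_mA_\ell P_{m'}$ as a unimodular phase times a Pauli word, regroup to count at most $LM$ or $LM^2$ distinct words, and bound the weights by the triangle inequality. The one difference is that you justify the reality of the regrouped coefficients explicitly, via Hilbert--Schmidt orthogonality and uniqueness of the Pauli expansion, whereas the paper simply asserts that Hermitian matrices have real coefficients in the Hermitian Pauli basis.
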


\begin{proof}
  The product of Pauli words is a phase times a Pauli word.  Thus every term
  \(P_mA_\ell\) and every term \(P_mA_\ell P_{m'}^\dagger\) can be rewritten as
  a scalar multiple of a Pauli word.  Regrouping identical Pauli words gives at
  most \(LM\) and \(LM^2\) distinct terms, respectively.  The coefficient
  bounds follow from the triangle inequality:
  $
    w(PA)
    \le
    \sum_{m,\ell}|\beta_m|\,|a_\ell|
    =
    w(P) w(A),
  $
  and
  $
    w(PAP^\dagger)
    \le
    \sum_{m,\ell,m'}|\beta_m|\,|a_\ell|\,|\beta_{m'}|
    =
    w(P)^2 w(A).
  $
  Finally, a Hermitian matrix has real coefficients in the Hermitian Pauli
  basis, so the regrouped coefficients are real in the two Hermitian cases
  stated above.
\end{proof}

The proposition shows that the classically formed operators remain in the
Pauli input model.  We can therefore analyze both cases uniformly.  Let
\begin{equation}
  \label{eq: generic-direct-pauli-expansion}
  Q=\sum_{j=1}^{J}\mu_j Q_j,
  \qquad
  Q_j\in\mathcal{V}_n,
  \qquad
  \mu_j\in\mathbb{R},
\end{equation}
where \(Q=PA\) in the admissible left preconditioned case and
\(Q=PAP^\dagger\) in the symmetric case, and the \(Q_j\) are the Pauli words appearing in the regrouped expansion.  Given Pauli access to the \(Q_j\), the operator \(Q\) admits a direct block-encoding through the standard linear combination of unitaries (LCU) framework~\cite{gilyen2019quantum,low2019hamiltonian,zhang2024circuit} (see Appendix~\ref{app: LCU implementation} for implementation details).  A more efficient block-encoding can be obtained by simulating \(e^{itQ}\) via a product formula~\cite{childs2021theory} and then applying the matrix logarithm construction of Appendix~\ref{app: hamiltonian simulation}.  The cost of the resulting block-encoding of \(Q\) is analyzed in the following proposition.

For an integer \(p\ge1\), define the corresponding commutator scaling \(\widetilde{\mu}_{\mathrm{comm}}^{(p)}(Q)\) by
\begin{equation}
  \label{eq: direct-Q-commutator-scaling}
  \sum_{j_1,\ldots,j_{p+1}=1}^{J}
  \left\|
  \left[
    i\mu_{j_{p+1}}Q_{j_{p+1}},
    \left[
      \cdots
      \left[
        i\mu_{j_2}Q_{j_2},
        i\mu_{j_1}Q_{j_1}
      \right]
      \cdots
    \right]
  \right]
  \right\|.
\end{equation}
This is the same quantity as in~\eqref{eq:alpha-comm-scaling}, specialized to
the Pauli expansion of \(Q\).

\begin{proposition}
  \label{prop: direct-log-exp-block-encoding-Q}
  (\textbf{Direct block-encoding of \(Q\)})
  Let \(Q\) be Hermitian, let \(\delta\in(0,\frac{1}{2}]\), and suppose that an upper
  bound \(\Lambda_Q\ge \|Q\|\) is available.  Choose
  $
    0<t\le \frac{\pi(1-\delta)}{2\Lambda_Q}.
  $
  Using the matrix logarithm construction from Appendix~\ref{app: hamiltonian simulation}
  together with a product formula of order \(p\) for \(e^{itQ}\), one obtains a
  $
    \left(\frac{\pi}{2t},\,2,\,\varepsilon_Q\right)
  $
  block-encoding of \(Q\).  Up to polylogarithmic factors in \(\frac{1}{\varepsilon_Q}\)
  and the polynomial dependence on \(\frac{1}{\delta^2}\) from the logarithm
  approximation, the number of Pauli exponentials is
  \begin{equation}
    \label{eq: direct-Q-implementation-cost}
    \begin{aligned}
    \mathrm{C}_Q(t,\varepsilon_Q)
    &=
    \widetilde{\mathcal{O}}\!\left(
      \Upsilon\,J\,
      \bigl(\widetilde{\mu}_{\mathrm{comm}}^{(p)}(Q)\bigr)^{\frac{1}{p}}
      t^{1+\frac{1}{p}}
      \delta^{-2(1+\frac{1}{p})}
      \varepsilon_Q^{-\frac{1}{p}}
    \right),
    \end{aligned}
  \end{equation}
  where \(\Upsilon\) is the number of exponentials in one segment of the chosen
  product formula.
\end{proposition}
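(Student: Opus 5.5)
The plan is to assemble the statement from two ingredients: the matrix logarithm construction of Appendix~\ref{app: hamiltonian simulation}, used as a black box, and the standard commutator-scaling error bound for order-\(p\) product formulas~\cite{childs2021theory}, specialized to the regrouped Pauli expansion~\eqref{eq: generic-direct-pauli-expansion}.

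\textbf{Step 1: normalization.} Since \(t\le \frac{\pi(1-\delta)}{2\Lambda_Q}\) and \(\Lambda_Q\ge\|Q\|\), the spectrum of \(tQ\) lies in \([-\frac{\pi}{2}(1-\delta),\frac{\pi}{2}(1-\delta)]\). This gives a margin \(\delta\) away from the branch points of the logarithm that we need. The construction in Appendix~\ref{app: hamiltonian simulation} takes a controlled version of \(V\approx e^{itQ}\). It then block-encodes \(\frac{2}{\pi}\arcsin(\sin(tQ))=\frac{2t}{\pi}Q\) using a QSVT polynomial on the \(\sin\) block, built from \((V-V^\dagger)/2i\) with one ancilla, plus a second ancilla for the QSVT phase. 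The result is a block-encoding of \(Q\) with normalization \(\frac{\pi}{2t}\) and two ancillas.

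\textbf{Step 2: polynomial cost.} The odd polynomial approximating \(\arcsin\) on \([-(1-\delta'),1-\delta']\) to accuracy \(\varepsilon'\) has degree \(d=\widetilde{\mathcal{O}}(\delta^{-1}\log\frac1{\varepsilon'})\), with \(\delta'=\Theta(\delta^2)\) coming from \(\sin(\frac{\pi}{2}(1-\delta))\approx 1-\Theta(\delta^2)\). The appendix records this as a polynomial dependence on \(\delta^{-2}\), and I will take its stated degree as given. The algorithm therefore uses \(d\) calls to controlled \(V\) and \(V^\dagger\).

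Each call may err by \(\eta\) in operator norm, so the total error is \(\mathcal{O}(d\eta)\). Rescaled by the normalization \(\frac{\pi}{2t}\), the block-encoding error becomes \(\frac{\pi}{2t}(\varepsilon'+d\eta)\). Setting this to \(\varepsilon_Q\) gives \(\varepsilon'=\Theta(t\varepsilon_Q)\) and \(\eta=\Theta(t\varepsilon_Q/d)\).

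\textbf{Step 3: product formula.} Implement \(e^{itQ}\) using \(r\) segments of an order-\(p\) formula, each with \(\Upsilon J\) Pauli exponentials. By the commutator bound, the Trotter error is \(\mathcal{O}(\widetilde{\mu}^{(p)}_{\mathrm{comm}}(Q)\,t^{p+1}/r^p)\). Choosing
\(r=\mathcal{O}\bigl((\widetilde{\mu}^{(p)}_{\mathrm{comm}})^{1/p}t^{1+1/p}\eta^{-1/p}\bigr)\) ensures error at most \(\eta\). Controlled Pauli exponentials cost the same up to constants.

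The total count is \(d\cdot r\cdot\Upsilon J\). Substituting \(\eta\) and \(d\) gives
\[
\Upsilon J(\widetilde{\mu}^{(p)}_{\mathrm{comm}})^{1/p}t^{1+1/p}\varepsilon_Q^{-1/p}\,d^{1+1/p}t^{-1/p},
\]
up to polylogarithmic factors.

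\textbf{Main obstacle.} The delicate step is the bookkeeping of \(t\), \(d\) and \(\delta\). I will absorb \(d^{1+1/p}\) into \(\delta^{-2(1+1/p)}\) via the appendix's degree bound, together with polylogarithmic factors. The leftover \(t^{-1/p}\) requires care: either the appendix states its error in unnormalized form, so no factor \(t^{-1/p}\) appears, or \(t\le\pi/(2\Lambda_Q)\) lets it be bounded by polylogarithmic or constant factors under the regime \(t=\Theta(1/\Lambda_Q)\). I would first check the appendix's exact error convention to decide which applies, and match \eqref{eq: direct-Q-implementation-cost} accordingly.
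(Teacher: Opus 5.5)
Your route is the paper's: verify \(t\|Q\|\le t\Lambda_Q\le\frac{\pi}{2}(1-\delta)\), apply Proposition~\ref{prop: log implementation} to \(H=tQ\), and realize each controlled \(e^{\pm itQ}\) by an order-\(p\) product formula with the commutator bound of~\cite{childs2021theory}. The paper's proof consists of exactly these steps and does no error bookkeeping of its own.

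The factor \(t^{-1/p}\) you ran into is removed there by one assertion. The paper says that the \((\frac{\pi}{2},2,\varepsilon_Q)\) block-encoding of \(tQ\) is ``equivalently'' a \((\frac{\pi}{2t},2,\varepsilon_Q)\) block-encoding of \(Q\), with the \emph{same} \(\varepsilon_Q\). That is your option (i): the error is measured on the normalized block \(\frac{2t}{\pi}Q\). This is the convention behind ``under our block-encoding normalization convention'' in Proposition~\ref{prop: log implementation}. It is also the convention behind the Appendix's ``equivalent to a \((\frac{\pi}{2t},2,\varepsilon_{\mathrm{tot}})\)-block-encoding of \(P\)''. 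With it, you substitute \(\varepsilon_{\mathrm{tot}}=\varepsilon_Q\), \(M\to J\) and \(\tilde{\alpha}_{\mathrm{comm}}\to\widetilde{\mu}^{(p)}_{\mathrm{comm}}(Q)\) in \eqref{eq: complexity of P}, and the stated count follows. To prove the statement as written, adopt this reading explicitly rather than leaving it as a check to be done later.

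Your option (ii) fails and should be dropped. The condition \(t\le\frac{\pi(1-\delta)}{2\Lambda_Q}\) is an upper bound on \(t\), hence a \emph{lower} bound on \(t^{-1/p}\). At \(t=t_Q\) one has \(t^{-1/p}=\Theta(\Lambda_Q^{1/p})\), which is neither constant nor polylogarithmic.

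Be aware, though, that your Step~2 bookkeeping (absolute error \(\frac{\pi}{2t}(\varepsilon'+d\eta)\)) follows the convention of Section~\ref{sec: qls problem and complexity}. There an \((\alpha_A,a_A,\varepsilon_A)\) encoding becomes a \((1,a_A,\frac{\varepsilon_A}{\alpha_A})\) encoding of \(A/\alpha_A\). The precision parameter \(\frac{\alpha_Q}{\varepsilon_Q}\) in \eqref{eq: direct-Q-precision-parameter} also presupposes this convention. Under it one needs \(\varepsilon_{\mathrm{tot}}=\Theta(t\varepsilon_Q)\), and the count carries \(t\,\varepsilon_Q^{-1/p}\) rather than \(t^{1+1/p}\varepsilon_Q^{-1/p}\). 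So the discrepancy you found is a genuine inconsistency between the paper's error conventions, not an arithmetic slip on your part.

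One smaller point concerns your degree \(d=\widetilde{\mathcal{O}}(\delta^{-1}\log\frac{1}{\varepsilon'})\), which is not what the construction provides. Lemma~70 of~\cite{gilyen2019quantum} with margin \(\delta'=\Theta(\delta^2)\) gives \(\mathcal{O}(\delta^{-2}\log\frac{1}{\varepsilon'})\), and this is precisely the source of \(\delta^{-2(1+\frac{1}{p})}\). You revert to that bound anyway, so state it that way from the start.
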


\begin{proof}
  Since \(t\|Q\|\le t\Lambda_Q\le \frac{\pi(1-\delta)}{2}\), Proposition~\ref{prop: log implementation}
  applies to \(H=tQ\).  It gives a
  $
    \left(\frac{\pi}{2},\,2,\,\varepsilon_Q\right)
  $
  block-encoding of \(tQ\).  Equivalently, the same unitary is a
  $
    \left(\frac{\pi}{2t},\,2,\,\varepsilon_Q\right)
  $
  block-encoding of \(Q\).  The product formula cost for implementing \(e^{itQ}\)
  with commutator scaling gives~\eqref{eq: direct-Q-implementation-cost}.
\end{proof}

Taking the largest admissible time,
$
  t_Q:=\frac{\pi(1-\delta)}{2\Lambda_Q},
$
gives the direct block-encoding parameters
\begin{equation}
  \label{eq: direct-Q-block-encoding-parameters}
  (\alpha_Q,a_Q,\varepsilon_Q)
  =
  \left(
    \frac{\Lambda_Q}{1-\delta},
    2,
    \varepsilon_Q
  \right).
\end{equation}
Consequently, the effective QLS condition parameter of \(Q\) in the sense of
\eqref{eq: effective qls condition parameter} is
\begin{equation}
  \label{eq: direct-Q-effective-condition-number}
  \kappa_{\mathrm{eff}}(Q;\alpha_Q)
  =
  \frac{\alpha_Q}{\sigma_{\min}(Q)}
  =
  \frac{\Lambda_Q}{(1-\delta)\sigma_{\min}(Q)}.
\end{equation}
In the ideal norm-aware case \(\Lambda_Q=\|Q\|\), this reduces to
\(\frac{\kappa(Q)}{1-\delta}\).  The always available Pauli coefficient weight bound
\(\Lambda_Q=w(Q)\) follows from \(\|Q\|\le w(Q)\); if this is the only usable
bound, the direct effective QLS condition parameter becomes
\(\frac{w(Q)}{(1-\delta)\sigma_{\min}(Q)}\), rather than the norm-aware quantity
\(\frac{\kappa(Q)}{1-\delta}\).
The corresponding block-encoding precision parameter entering
\eqref{eq: QLS complexity} is
\begin{equation}
  \label{eq: direct-Q-precision-parameter}
  \frac{\alpha_Q}{\varepsilon_Q}
  =
  \frac{\Lambda_Q}{(1-\delta)\varepsilon_Q}.
\end{equation}

\begin{theorem}
  \label{thm: direct-pauli-preconditioning-complexity}
  (\textbf{QLS parameters for direct Pauli-structured preconditioners})
  Let \(Q\in\{PA,PAP^\dagger\}\) be an admissible Hermitian preconditioned
  operator as above, and suppose \(Q\) is invertible.  Suppose that an upper
  bound \(\Lambda_Q\ge\|Q\|\) is used and that \(Q\) is
  directly block-encoded using Proposition~\ref{prop: direct-log-exp-block-encoding-Q}
  with \(t=t_Q\).  Then a standard QLS solver applied to the system with
  coefficient matrix \(Q\) has matrix oracle query complexity
  \begin{equation}
	    \label{eq: direct-query-complexity-preconditioned-Q}
	    \mathrm{T}_A\!\left(
	      \frac{\Lambda_Q}{(1-\delta)\sigma_{\min}(Q)},
	      \frac{\Lambda_Q}{(1-\delta)\varepsilon_Q},
	      \frac{1}{\varepsilon}
	    \right).
  \end{equation}
  If a state proportional to \(Pb\) is prepared from \(U_b\) using an
  \((\alpha_P,a_P,\varepsilon_P)\) block-encoding of \(P\), then the additional
  right-hand side preparation overhead is
  \begin{equation}
    \label{eq: rhs-prep-direct-preconditioned-Q}
    \mathrm{T}_r
    =
    \mathcal{O}\!\left(
      \frac{\alpha_P}{\|Pb\|}
      \log\frac{1}{\varepsilon}
    \right).
  \end{equation}
  Hence the total query complexity with respect to the right-hand side
  preparation oracle is
  \begin{equation}
    \label{eq: direct-rhs-query-complexity-preconditioned-Q}
    \mathrm{T}_r\,
	    \mathrm{T}_b\!\left(
	      \frac{\Lambda_Q}{(1-\delta)\sigma_{\min}(Q)},
	      \frac{\Lambda_Q}{(1-\delta)\varepsilon_Q},
	      \frac{1}{\varepsilon}
	    \right).
  \end{equation}
  In the symmetric case, recovering a state proportional to
  \(x=P^\dagger y\) from a normalized solution state
  \(|\bar y\rangle=\frac{y}{\|y\|}\) has additional overhead
  \begin{equation}
    \label{eq: symmetric-recovery-overhead-direct}
    \mathrm{T}_x
    =
    \mathcal{O}\!\left(
      \frac{\alpha_P}{\|P^\dagger|\bar y\rangle\|}
      \log\frac{1}{\varepsilon}
    \right).
  \end{equation}
\end{theorem}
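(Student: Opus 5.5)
The theorem is essentially a bookkeeping consequence of Proposition~\ref{prop: direct-log-exp-block-encoding-Q} combined with the abstract complexity model \eqref{eq: QLS complexity}--\eqref{eq: QLS complexity 2}. The plan is to instantiate that model with the direct block-encoding parameters \eqref{eq: direct-Q-block-encoding-parameters}. Then, for the right-hand side and the recovery step, we reuse the amplitude-amplification arguments already given in Section~\ref{sec: preconditioned linear systems}.

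\textbf{Step 1: direct block-encoding.} Since \(Q\) is Hermitian and \(\Lambda_Q\ge\|Q\|\), the choice \(t=t_Q=\frac{\pi(1-\delta)}{2\Lambda_Q}\) is admissible in Proposition~\ref{prop: direct-log-exp-block-encoding-Q}. That proposition yields a \((\frac{\pi}{2t_Q},2,\varepsilon_Q)\) block-encoding, which simplifies to \((\frac{\Lambda_Q}{1-\delta},2,\varepsilon_Q)\).

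\textbf{Step 2: matrix query complexity.} Since \(\alpha_Q\ge\Lambda_Q\ge\|Q\|\), we can use the rescaling argument preceding \eqref{eq: QLS complexity 2}. The unitary is a \((1,2,\varepsilon_Q/\alpha_Q)\) block-encoding of \(Q/\alpha_Q\), whose singular values lie in \([\sigma_{\min}(Q)/\alpha_Q,1]\). Invertibility of \(Q\) ensures this interval is nondegenerate.

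If \(\alpha_Q<1\), then \(\|Q\|<1\). I would check that the argument still goes through: the singular-value interval \([\sigma_{\min}(Q)/\alpha_Q,\|Q\|/\alpha_Q]\) still sits inside \([\cdot,1]\), so only the normalization \(\alpha_Q\ge\|Q\|\) matters, not \(\alpha_Q\ge1\).

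By the definition of the effective QLS condition parameter \eqref{eq: effective qls condition parameter} and the identity \eqref{eq: direct-Q-effective-condition-number}, the condition argument is \(\frac{\Lambda_Q}{(1-\delta)\sigma_{\min}(Q)}\). The precision argument is \eqref{eq: direct-Q-precision-parameter}. Substituting both into \(\mathrm{T}_A\) gives \eqref{eq: direct-query-complexity-preconditioned-Q}. One more point to check: the solver is invoked on a Hermitian, but possibly indefinite, matrix. By Remark~\ref{remark: qlsp psd wlog}, the same complexity model applies on the two-sided spectral domain.

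\textbf{Step 3: right-hand side and recovery.} For the right-hand side, apply \(U_P\) to \(|0\rangle|b\rangle\). The encoded block has amplitude \(\|Pb\|/\alpha_P\), so oblivious amplitude amplification gives \(\mathrm{T}_r\) as in \eqref{eq: rhs-prep-direct-preconditioned-Q}. This is exactly the derivation in the paragraph ``Preparation of the preconditioned right-hand side''. Each call to the state-preparation oracle for \(Pb/\|Pb\|\) costs \(\mathrm{T}_r\) calls to \(U_b\). Composing this with \(\mathrm{T}_b\) evaluated at the parameters from Step 2 gives \eqref{eq: direct-rhs-query-complexity-preconditioned-Q}.

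In the symmetric case, applying \(U_P^\dagger\) (a block-encoding of \(P^\dagger/\alpha_P\)) to \(|\bar y\rangle\) and amplifying gives \eqref{eq: symmetric-recovery-overhead-direct}.

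\textbf{Main obstacle.} No step is deep. The only delicate point is error propagation. The \(\log\frac1\varepsilon\) amplification overheads and the approximate \(U_P\) (error \(\varepsilon_P\)) introduce additional error, which must be absorbed into the target accuracy \(\varepsilon\) by constant-factor rescaling of the tolerances. I would state this explicitly, as in Section~\ref{sec: standard qls preconditioning}, rather than track it precisely.
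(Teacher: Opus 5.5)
Your proposal is correct and follows essentially the same route as the paper, which gives no separate proof. The paper assembles the theorem by substituting the direct block-encoding parameters \eqref{eq: direct-Q-block-encoding-parameters}, with the effective condition parameter \eqref{eq: direct-Q-effective-condition-number} and precision parameter \eqref{eq: direct-Q-precision-parameter}, into the abstract model \eqref{eq: QLS complexity}--\eqref{eq: QLS complexity 2}, and then reuses the right-hand-side amplification and $U_P^\dagger$-recovery arguments from Section~\ref{sec: preconditioned linear systems}, just as you do (your observation that the rescaling step needs only $\alpha_Q\ge\|Q\|$ rather than $\alpha_Q\ge1$ is a correct minor sharpening).
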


Theorem~\ref{thm: direct-pauli-preconditioning-complexity} isolates the main
difference from Theorem~\ref{thm:no-go-preconditioning}.  In the separate block-encoding composition
model, the effective normalization is inherited from the product of independent
block-encoding normalizations, namely \(\alpha_P\alpha_A\) for \(PA\) and
\(\alpha_P^2\alpha_A\) for \(PAP^\dagger\).  In the direct block-encoding construction, the
normalization is instead
\[
  \alpha_Q=\frac{\Lambda_Q}{1-\delta}.
\]
Therefore the effective QLS condition parameter contribution improves over an existing
\((\alpha_A,a_A,\varepsilon_A)\) block-encoding of \(A\) whenever
 $ \frac{\Lambda_Q}{(1-\delta)\sigma_{\min}(Q)}
  <
  \alpha_A\kappa_A$.
This is equivalent to
  $
  \frac{\Lambda_Q}{\sigma_{\min}(Q)}
  <
  (1-\delta)\alpha_A\kappa_A
$.
The block-encoding precision contribution is no worse than that of the baseline
oracle when
\begin{equation}
  \label{eq: direct-precision-benefit}
  \frac{\Lambda_Q}{(1-\delta)\varepsilon_Q}
  \lesssim
  \frac{\alpha_A}{\varepsilon_A}.
\end{equation}

Finally, an end-to-end primitive gate comparison must include the cost of
realizing one query to the directly encoded oracle.  If one call to the baseline
oracle \(U_A\) costs \(\mathrm{C}_A\) primitive operations, then the matrix oracle
part of the direct preconditioned method is favorable only if
\begin{equation}
  \label{eq: direct-end-to-end-benefit}
  \begin{aligned}
  &\mathrm{C}_Q(t_Q,\varepsilon_Q)\,
  \mathrm{T}_A\!\left(
    \frac{\Lambda_Q}{(1-\delta)\sigma_{\min}(Q)},
    \frac{\Lambda_Q}{(1-\delta)\varepsilon_Q},
    \frac{1}{\varepsilon}
  \right) \\
  &\quad <
  \mathrm{C}_A\,
  \mathrm{T}_A\!\left(
    \alpha_A\kappa_A,
    \frac{\alpha_A}{\varepsilon_A},
    \frac{1}{\varepsilon}
  \right).
  \end{aligned}
\end{equation}
Thus direct block-encoding removes the normalization obstruction from separate block-encoding composition, but
it does not by itself prove a full gate complexity advantage.  The advantage is
obtained only in the regime where the reduction in
\(\frac{\Lambda_Q}{\sigma_{\min}(Q)}\) and the chosen accuracy \(\varepsilon_Q\)
compensate for the Pauli expansion size \(J\), the
commutator scaling
\(\widetilde{\mu}_{\mathrm{comm}}^{(p)}(Q)\), and the right-hand side preparation
and recovery overheads.

\begin{remark}
  \label{remark: direct-preconditioning-vs-lapworth}
  The analysis above is consistent with the positive message of
  \cite{lapworth2025preconditioned}: once the classically formed operator
  \(Q\in\{PA,PAP^\dagger\}\) is block-encoded directly, the separate
  block-encoding obstruction in Theorem~\ref{thm:no-go-preconditioning} no
  longer applies.  The precise replacement is
  \[
	    \alpha_P\alpha_A,\ \alpha_P^2\alpha_A
	    \quad
	    \longrightarrow
	    \quad
	    \frac{\Lambda_Q}{1-\delta}.
	  \]
	  The price is that one must account for the direct realization cost
	  \(\mathrm{C}_Q(t_Q,\varepsilon_Q)\), rather than treating \(U_Q\) as a free
	  black box oracle.
	\end{remark}

\begin{remark}
  \label{remark: direct-classical-preprocessing}
  The direct construction is a structured access statement.  It assumes that the
  regrouped Pauli list for \(Q\) is available to the quantum implementation.
  Classically forming this list costs at most \(O(LM)\) Pauli multiplications
  before regrouping for \(Q=PA\), and at most \(O(LM^2)\) for
  \(Q=PAP^\dagger\).  Hashing or sorting is then needed to collect identical
  Pauli words, and the resulting list must be stored or sampled from.  These
  classical preprocessing and storage costs are not included in the matrix oracle
  query bounds above unless stated otherwise.
\end{remark}

\section{Pauli-structured preconditioners for the randomized QLS solver}
\label{sec: randomized qls preconditioning section}

For a matrix $A$ with Pauli expansion~\eqref{eq: A}, we now turn to the randomized QLS solver in the randomized Pauli access model of~\cite{wang2024qubit}.
In this setting, the preconditioner is constructed classically, and the quantum device processes only the resulting preconditioned instance.
This differs qualitatively from the separate block-encoding composition model of Section~\ref{sec: standard qls preconditioning}, and it is precisely in this regime that Pauli-structured preconditioners can yield a genuine improvement in the dominant structural proxy.

\subsection{Randomized quantum linear system problem and its complexity}
\label{sec: randomized QLS complexity}

We first recall the randomized quantum approach of Wang, McArdle, and Berta~\cite{wang2024qubit} for extracting classical information about $A^{-1}b$ without coherent oracle access, such as a block-encoding, for $A$.
The key distinction lies in the input model: instead of a unitary oracle $U_A$, one assumes a classical Pauli-basis description of $A$.

Let $n:=\log_2 N$, and suppose that $A$ admits the Pauli expansion~\eqref{eq: A} with $a_\ell\in\mathbb{R}$.
Define the Pauli coefficient weight by $w(A):=\sum_{\ell=1}^{L}|a_\ell|$.
In the randomized Pauli access model, one assumes classical access to the list $\{(a_{\ell},A_{\ell})\}_{\ell=1}^{L}$, or equivalently, the ability to sample indices $\ell$ with probabilities proportional to $|a_{\ell}|$ and to implement the corresponding Pauli words $A_{\ell}$ as quantum gates.

Following~\cite{wang2024qubit}, the randomized algorithm does not output the quantum state $|x\rangle\propto A^{-1}|b\rangle$.
Instead, it returns Monte Carlo estimates of scalar quantities derived from $A^{-1}|b\rangle$.
A representative end-to-end task, suitable for complexity comparison, is to estimate a single entry of the solution vector in the computational basis.
\begin{problem}
  \label{problem: randomized QLSP}
  (cf. Eq.~(23) in~\cite{wang2024qubit})
  Given an invertible Hermitian matrix $A\in\mathbb{C}^{N\times N}$ with Pauli description~\eqref{eq: A}, a vector $b\in\mathbb{C}^{N}$, an index $i\in[N]$, and a normalization constant $c\in\mathbb{C}$, the goal is to output an estimate of
  $
    \frac{1}{c}\,(A^{-1}b)_i
  $
  up to additive error $\varepsilon>0$ with constant success probability.
\end{problem}

\begin{remark}
  \label{remark: randomized QLS complexity}
  In the randomized framework, the dominant quantum resources are the number of circuit samples and the depth of each sampled circuit.
  For Hermitian matrices, the algorithmic space cost is $\log N+1$ qubits.
  For general non-Hermitian matrices, one can pass to a Hermitian embedding at the cost of one additional qubit, giving $\log N+2$ qubits~\cite{wang2024qubit}.

  Corollary~1 of~\cite{wang2024qubit} implies that for scalar functionals of the form $\frac{1}{q}\langle\psi|A^{-1}|b\rangle$, where $q>0$ is a freely chosen normalization parameter, one can obtain an additive-$\varepsilon$ estimate using
  \begin{equation}
    \label{eq: randomized sample complexity}
    C_{\mathrm{sample}}(A^{-1};\varepsilon,q)
    = \widetilde{\mathcal{O}}\!\left(\frac{1}{\sigma_{\min}(A)^{2}\,\varepsilon^{2}q^{2}}\right)
  \end{equation}
  circuit samples, where each sample has gate depth
  \begin{equation}
    \label{eq: randomized depth}
    C_{\mathrm{depth}}(A^{-1};\varepsilon)
    = \widetilde{\mathcal{O}}\!\left(\frac{w(A)^{2}}{\sigma_{\min}(A)^{2}}\,\log^{2}\!\left(\tfrac{1}{\varepsilon}\right)\right)
  \end{equation}
  up to the additional cost of preparing $|b\rangle$ and, when applicable, $|\psi\rangle$.
  Table~1 of~\cite{wang2024qubit} reports the corresponding end-to-end gate depth scaling for Problem~\ref{problem: randomized QLSP}.
  Recent resource analyses further show that the sampling overhead in randomized QLS solvers can be decisive for end-to-end performance~\cite{hariprakash2025randomized}.  Therefore, in the sequel we use the depth expression only to define a structural per-sample comparison, rather than as a claim of full end-to-end advantage.
\end{remark}

The crucial observation is that, in the randomized Pauli access model, the dominant structural parameter controlling gate depth is the Pauli coefficient weight $w(A)$ rather than a block-encoding normalization factor such as $\alpha_A$.
To avoid comparing ratios of asymptotic upper bounds directly, we use the following
scale-invariant structural proxy for the per-sample depth.  For an invertible
Pauli-expandable matrix \(H\), define
\begin{equation}
  \label{eq: rqls-depth-proxy}
  D_{\mathrm{RQLS}}(H)
  :=
  \left(
    \frac{w(H)}{\sigma_{\min}(H)}
  \right)^2.
\end{equation}
Thus the structural part of~\eqref{eq: randomized depth} is
\(D_{\mathrm{RQLS}}(A)\), up to polylogarithmic factors in the target accuracy
and state-preparation costs.
Accordingly, to study preconditioning in the randomized QLS solver, one should track how a Pauli-structured preconditioner changes both $w(A)$ and the relevant inverse norm parameters of the transformed system.

\subsection{Pauli-structured preconditioners for the randomized QLS solver}
\label{sec: randomized preconditioning}

We now study how a Pauli-structured preconditioner of the form~\eqref{eq: preconditioner} can be combined with the randomized QLS solver of~\cite{wang2024qubit}.
In contrast to standard QLS solvers based on block-encodings, the dominant parameters in the randomized Pauli access model are the Pauli coefficient weight $w(\cdot)$ and inverse norm factors such as $\|A^{-1}\|$.
We therefore track how these quantities change under preconditioning.

We focus on left preconditioning, namely $PAx=Pb$, which preserves the solution $x=A^{-1}b$ whenever $PA$ is invertible.
If \(PA\) is non-Hermitian, the linear system can be processed as in Remark~\ref{remark: qlsp psd wlog}.
Thus the desired solution is again obtained from the lower block, and the
normalized right-hand side used by the randomized solver is the normalized
version of \((Pb,0)^T\).

Suppose that $A$ admits the Pauli expansion~\eqref{eq: A} with real coefficients and that $P$ admits~\eqref{eq: preconditioner} with real coefficients.
Then
\begin{equation}
  \label{eq: pauli expansion PA}
  PA = \sum_{m=1}^{M}\sum_{\ell=1}^{L} \beta_m a_{\ell}\, P_m A_{\ell}.
\end{equation}
Since products of Pauli words are again Pauli words up to a phase, the preconditioned matrix $PA$ admits a Pauli expansion as well and therefore fits the same randomized Pauli access model after regrouping identical Pauli words.

We next record the basic quantitative bounds that follow from this observation.
The theorem isolates the two quantities that matter for the randomized QLS solver:
the singular value parameter and the regrouped Pauli coefficient weight.

\begin{theorem}[Randomized QLS]
  \label{thm: randomized qls preconditioning}
  If \(PA\) is invertible, then applying the
  randomized QLS solver in the randomized Pauli access model to \(PAx=Pb\),
  directly when \(PA\) is Hermitian and through the Hermitian embedding
  otherwise, gives the per-sample depth bound
  \begin{equation}
    \label{eq: randomized depth PA}
    C_{\mathrm{depth}}\bigl((PA)^{-1};\varepsilon\bigr)
    =
    \widetilde{\mathcal{O}}\!\left(
      \frac{w(PA)^{2}}{\sigma_{\min}(PA)^{2}}
      \log^{2}\!\left(\tfrac{1}{\varepsilon}\right)
    \right),
  \end{equation}
  up to state-preparation costs and the constant-factor overhead of the
  Hermitian embedding.  Equivalently, the structural part of the bound is
  \(D_{\mathrm{RQLS}}(PA)\).  Moreover, the sufficient condition
  \(\|I-PA\|\le\eta<1\) implies that \(PA\) is invertible,
  \(\sigma_{\min}(PA)\ge 1-\eta\), and
  \begin{equation}
    \label{eq: randomized depth PA neumann}
    C_{\mathrm{depth}}\bigl((PA)^{-1};\varepsilon\bigr)
    \le
    \widetilde{\mathcal{O}}\!\left(
      \frac{w(P)^2 w(A)^2}{(1-\eta)^2}
      \log^{2}\!\left(\tfrac{1}{\varepsilon}\right)
    \right).
  \end{equation}
  The same substitution rule applies to the sample-complexity bounds of
  \cite{wang2024qubit}: replace \(w(A)\) and \(\sigma_{\min}(A)\) by \(w(PA)\) and
  \(\sigma_{\min}(PA)\), respectively.  The normalization
  \(Pb/\|Pb\|\) scales the estimated functional by \(1/\|Pb\|\), which
  may affect the shot complexity through the normalization parameter \(q\) but
  not the per-sample depth proxy.
\end{theorem}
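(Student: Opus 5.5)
The plan is to treat the randomized solver of~\cite{wang2024qubit} as a black box whose guarantees, recorded in Remark~\ref{remark: randomized QLS complexity}, depend only on three inputs: a Pauli description of the coefficient matrix, its Pauli coefficient weight $w(\cdot)$, and its smallest singular value. The proof then reduces to two things. First, we check that the preconditioned instance is a valid input to that solver. Second, we bound the two parameters.

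\textbf{Step 1 (admissibility of $PA$).} By~\eqref{eq: pauli expansion PA}, each product $P_mA_\ell$ equals a phase in $\{\pm1,\pm i\}$ times a Pauli word. After regrouping identical words, $PA$ has a Pauli expansion with at most $LM$ terms, and by the triangle inequality $w(PA)\le w(P)w(A)$. This is the argument of Proposition~\ref{prop: pauli-closure-direct-preconditioning}, except that Hermiticity is not assumed, so the coefficients may be complex.

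If $PA$ is Hermitian, the coefficients are real and the solver applies directly.

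Otherwise we pass to the Hermitian dilation of Remark~\ref{remark: qlsp psd wlog}. The dilation equals $X\otimes\tfrac{1}{2}(PA+(PA)^\dagger)$ plus $Y\otimes\tfrac{i}{2}(PA-(PA)^\dagger)$, up to the sign convention for $Y$. Its Pauli weight is therefore at most a constant multiple of $w(PA)$; in fact, expanding $\sigma_+\otimes Q_j$ and $\sigma_-\otimes Q_j^\dagger$ gives weight at most $2w(PA)$ or better. Its singular values coincide with those of $PA$. This is the claimed constant-factor overhead, and the right-hand side becomes the normalized vector $(Pb,0)^T/\|Pb\|$.

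\textbf{Step 2 (depth bound).} Substituting $H=PA$, or its dilation, into~\eqref{eq: randomized depth} gives~\eqref{eq: randomized depth PA} at once. Its structural part is $D_{\mathrm{RQLS}}(PA)$ by~\eqref{eq: rqls-depth-proxy}. The same substitution in~\eqref{eq: randomized sample complexity} yields the sample-complexity statement.

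For the normalization remark, note that $(A^{-1}b)_i=((PA)^{-1}Pb)_i=\|Pb\|\,((PA)^{-1}\widehat{Pb})_i$ with $\widehat{Pb}:=Pb/\|Pb\|$. The factor $\|Pb\|$ is therefore absorbed into the constant $c$, equivalently into $q$. This changes only the shot count, not the depth, since~\eqref{eq: randomized depth} does not involve $q$.

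\textbf{Step 3 (Neumann-type bound).} Suppose $\|I-PA\|\le\eta<1$. For any unit vector $v$,
\[
\|PAv\|\ge\|v\|-\|(I-PA)v\|\ge1-\eta>0,
\]
so $PA$ is injective, hence invertible, with $\sigma_{\min}(PA)\ge1-\eta$. Alternatively, the Neumann series $\sum_k(I-PA)^k$ converges to $(PA)^{-1}$ with norm at most $1/(1-\eta)$. Combining this with $w(PA)\le w(P)w(A)$ from Step 1 and inserting both into~\eqref{eq: randomized depth PA} gives~\eqref{eq: randomized depth PA neumann}.

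The main obstacle is the non-Hermitian case in Step 1. We must make sure that the guarantees of~\cite{wang2024qubit} transfer with only constant loss. That requires an explicit Pauli decomposition of the dilation, using $\sigma_\pm=(X\pm iY)/2$ on the extra qubit, and a check that the solution is read from the correct block. Everything else is direct substitution.
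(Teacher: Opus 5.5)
Your proposal is correct and follows essentially the same route as the paper: you substitute $PA$, or its Hermitian dilation (same nonzero singular values, Pauli weight a constant multiple of $w(PA)$), into the bounds of \cite{wang2024qubit}, and you use $\|I-PA\|\le\eta$ to get $\sigma_{\min}(PA)\ge 1-\eta$, which you combine with $w(PA)\le w(P)w(A)$. The differences are cosmetic. You prove the singular-value bound by the reverse triangle inequality rather than the Neumann series. Your $\sigma_\pm$ expansion gives $2\,w(PA)$ before regrouping, whereas regrouping into real coefficients $\operatorname{Re}\mu_j$ on $X\otimes Q_j$ and $-\operatorname{Im}\mu_j$ on $Y\otimes Q_j$ recovers the paper's $\sqrt{2}\,w(PA)$.
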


\begin{proof}
  The depth bound follows by applying the randomized complexity bounds of
  \cite{wang2024qubit} to \(PA\), or to its Hermitian embedding in the
  non-Hermitian case.  The embedding has the same nonzero singular values as
  \(PA\) and Pauli coefficient weight at most \(\sqrt{2}\,w(PA)\), which is
  absorbed into the \(\widetilde{\mathcal O}\) notation.  For the Neumann bound,
  let \(E:=I-PA\).  If \(\|E\|\le\eta<1\), then
  \[
    (PA)^{-1}=(I-E)^{-1}=\sum_{k=0}^{\infty}E^k,
    \qquad
    \|(PA)^{-1}\|\le \frac{1}{1-\eta}.
  \]
  Hence \(\sigma_{\min}(PA)\ge1-\eta\), and combining this with
  \(w(PA)\le w(P)w(A)\) gives~\eqref{eq: randomized depth PA neumann}.  The final
  statement is the corresponding substitution in the sample-complexity
  expression, together with the right-hand side normalization above.
\end{proof}

Theorem~\ref{thm: randomized qls preconditioning} makes the central trade-off in randomized QLS explicit.
The primary design objective is to reduce
\[
  \frac{w(PA)}{\sigma_{\min}(PA)},
\]
or equivalently to reduce \(w(PA)\) while maintaining a lower bound on
\(\sigma_{\min}(PA)\).  The product \(w(P)w(A)\) and the Neumann factor
\((1-\eta)^{-1}\) give a useful analyzable surrogate, but they are only upper
bounds.  The numerical examples below exploit cancellation after regrouping,
where \(w(PA)\) can be much smaller than the loose bound \(w(P)w(A)\).

\begin{remark}
  \label{remark: rqls preconditioning benefit}
  Theorem~\ref{thm: randomized qls preconditioning} shows that the randomized QLS solver is governed by two competing effects: an inverse stability factor and a Pauli coefficient weight factor.
  Rather than taking ratios of two \(\widetilde{\mathcal O}\) bounds, we compare the structural proxies defined in~\eqref{eq: rqls-depth-proxy}.  The preconditioned-to-unpreconditioned proxy ratio is
  \begin{equation}
    \label{eq: rqls depth proxy ratio}
    \frac{D_{\mathrm{RQLS}}(PA)}{D_{\mathrm{RQLS}}(A)}
    =
    \left(
      \frac{w(PA)/\sigma_{\min}(PA)}
           {w(A)/\sigma_{\min}(A)}
    \right)^2.
  \end{equation}
  Thus preconditioning reduces the dominant per-sample depth proxy whenever
  \[
    \frac{w(PA)}{\sigma_{\min}(PA)}
    <
    \frac{w(A)}{\sigma_{\min}(A)}.
  \]

  The same singular-value trade-off also appears in the shot complexity through
  the dependence on \(\frac{1}{\sigma_{\min}(PA)}\) and on the chosen normalization
  parameter \(q\) in~\eqref{eq: randomized sample complexity}.  Therefore
  \(D_{\mathrm{RQLS}}\) is a per-sample depth proxy, not a complete end-to-end
  sample-complexity or gate-complexity guarantee.
\end{remark}

\section{Numerical experiments for the randomized QLS solver}
\label{sec: randomized numerical experiment}

We give a concrete numerical illustration of the improvement mechanism described in Remark~\ref{remark: rqls preconditioning benefit}.
We consider the $n$-qubit family
\begin{equation}
  \label{eq: randomized numerical family}
  \begin{aligned}
  A
  &=
  c_0\,I^{\otimes n}
  +
  c_Z \sum_{i=1}^{n} Z_i \\
  &\quad +
  c_{ZZ} \sum_{i=1}^{n-1} Z_i Z_{i+1}
  +
  \varepsilon \sum_{i=1}^{n-1} (X_i X_{i+1} + Y_i Y_{i+1})
  \end{aligned}
\end{equation}
The resulting matrices are Hermitian and invertible but not positive semidefinite.
We fix the coefficients
\begin{equation}
  \label{eq: randomized numerical witness A}
  c_0 = 0.5,
  \quad
  c_Z = -1.0,
  \quad
  c_{ZZ} = -0.8,
  \quad
  \varepsilon = 0.05.
\end{equation}
For the preconditioner, we restrict to the diagonal Pauli ansatz
\begin{equation}
  \label{eq: randomized numerical ansatz}
  P_n = \sum_{s\in\{I,Z\}^{\otimes n}} \beta^{(n)}_s\, s,
\end{equation}
which keeps $P_n$ Hermitian and aligned with the dominant $I/Z$ structure of $A_n$.
In the computational basis, this $\{I,Z\}^{\otimes n}$ ansatz corresponds exactly to a diagonal (Jacobi-type) preconditioner, since every $n$-qubit operator diagonal in the Pauli $Z$ basis is a linear combination of $\{I,Z\}^{\otimes n}$ words.
This ansatz has \(2^n\) Pauli terms, so the experiment should be read as a
	proof-of-principle numerical witness for the improvement mechanism rather than
as a scalable construction.
Because a diagonal \(I/Z\) preconditioner need not commute with the \(XX+YY\)
part of \(A_n\), the product \(P_nA_n\) is generally non-Hermitian.  We therefore
interpret the randomized QLS solver through the Hermitian embedding
described above, and compute the table using the singular values of \(P_nA_n\)
and the regrouped Pauli coefficient weight \(w(P_nA_n)\).
For each qubit number $n$, we construct a candidate preconditioner by the
least-squares approximate inverse fit described in
Appendix~\ref{app: computation preconditioner coefficients}, and then certify its
effect by directly evaluating the proxy in~\eqref{eq: rqls-depth-proxy}.
We report the scale-invariant square root proxy ratio
\begin{equation}
  \label{eq: randomized numerical ratio}
  R_n
  :=
  \sqrt{
    \frac{D_{\mathrm{RQLS}}(P_n A_n)}{D_{\mathrm{RQLS}}(A_n)}
  }
  =
  \frac{w(P_n A_n)/\sigma_{\min}(P_n A_n)}
       {w(A_n)/\sigma_{\min}(A_n)}.
\end{equation}
Thus \(R_n^2\) is the preconditioned-to-unpreconditioned per-sample depth proxy
ratio.
The last two columns of Table~\ref{tab: randomized numerical witness} use the
same \(P_n\) values to evaluate the symmetric direct block-encoding comparison for
\(Q_n:=P_nA_nP_n\).  We set the direct block-encoding slack to
\(\delta_{\mathrm{be}}=0.1\) and use the norm-aware normalizations
\[
  \alpha_{A,n}^{\mathrm{dir}}
  =
  \frac{\|A_n\|}{1-\delta_{\mathrm{be}}},
  \quad
  \alpha_{P,n}
  =
  \frac{\|P_n\|}{1-\delta_{\mathrm{be}}},
  \quad
  \alpha_{Q,n}
  =
  \frac{\|Q_n\|}{1-\delta_{\mathrm{be}}}.
\]
The reported ratios are computed from
\[
\begin{aligned}
  \kappa_A^{\mathrm{dir}}
  &=
  \frac{\alpha_{A,n}^{\mathrm{dir}}}{\sigma_{\min}(A_n)},\\
  \kappa_Q^{\mathrm{dir}}
  &=
  \frac{\alpha_{Q,n}}{\sigma_{\min}(Q_n)},\\
  \kappa_{\mathrm{sep}}^{\mathrm{sym}}
  &=
  \frac{\alpha_{P,n}^2\alpha_{A,n}^{\mathrm{dir}}}{\sigma_{\min}(Q_n)}.
\end{aligned}
\]
These two columns are norm-aware effective QLS condition parameter diagnostics only.  They do
not include the cost of estimating the spectral norms, forming and storing the
regrouped Pauli list of \(Q_n\), or implementing one direct block-encoding query.

\begin{table*}[t]
  \centering
  \small
  \setlength{\tabcolsep}{4pt}
  \caption{Qubit sweep for the fixed \(A_n\) coefficient family, showing the randomized QLS per-sample depth proxy ratio \(R_n^2=D_{\mathrm{RQLS}}(P_nA_n)/D_{\mathrm{RQLS}}(A_n)\) and symmetric direct block-encoding effective QLS condition parameter diagnostics.}
  \label{tab: randomized numerical witness}
  \resizebox{0.8\textwidth}{!}{
  \begin{tabular}{c c c c c c c c c}
    \hline
    $n$ & $\sigma_{\min}(A_n)$ & $\sigma_{\min}(P_n A_n)$ & $w(A_n)$ & $w(P_n A_n)$ & $R_n$ & $R_n^2$ & $\kappa_Q^{\mathrm{dir}} / \kappa_A^{\mathrm{dir}}$ & $\kappa_Q^{\mathrm{dir}} / \kappa_{\mathrm{sep}}^{\mathrm{sym}}$ \\
    \hline
    $2$ & $1.200000$ & $0.229412$ & $3.4$ & $0.269118$ & $0.414027$ & $0.171418$ & $0.982422$ & $0.493043$ \\
    $3$ & $0.500000$ & $0.229811$ & $5.3$ & $0.351295$ & $0.144210$ & $0.020797$ & $0.935310$ & $0.099930$ \\
    $4$ & $0.303122$ & $0.196604$ & $7.2$ & $0.450454$ & $0.096459$ & $0.009304$ & $0.819487$ & $0.041240$ \\
    $5$ & $0.103131$ & $0.113104$ & $9.1$ & $0.739749$ & $0.074123$ & $0.005494$ & $0.258770$ & $0.010558$ \\
    $6$ & $0.096869$ & $0.105125$ & $11.0$ & $0.984098$ & $0.082437$ & $0.006796$ & $0.243173$ & $0.008562$ \\
    $7$ & $0.106256$ & $0.079689$ & $12.9$ & $1.454967$ & $0.150391$ & $0.022617$ & $0.287213$ & $0.017438$ \\
    \hline
  \end{tabular}
  }
\end{table*}

Table~\ref{tab: randomized numerical witness} shows that, for the fixed
diagonal-term-dominated \(A_n\) family, the separately computed preconditioners satisfy
\(R_n<1\) for all tested sizes \(n=2,\dots,7\).
Hence the randomized QLS solver improvement criterion in Remark~\ref{remark: rqls preconditioning benefit} is satisfied throughout this qubit range.
The improvement is substantial: the ratio drops to $R_5 \approx 0.074$ and remains below $0.16$ for $n=3,\dots,7$, corresponding to per-sample depth proxy reductions on the order of $10^{-2}$. The
reported values satisfy
\[
  \frac{w(P_nA_n)}{\sigma_{\min}(P_nA_n)}
  <
  \frac{w(A_n)}{\sigma_{\min}(A_n)}
\]
for every tested instance, which is exactly the condition \(R_n<1\).  Thus the
examples illustrate the regime predicted by
Theorem~\ref{thm: randomized qls preconditioning}: after regrouping the Pauli
expansion, the preconditioned operator has a smaller Pauli-weight-to-stability
ratio, and hence a smaller randomized QLS per-sample depth proxy.

The last two columns report the symmetric direct block-encoding comparison for
\(Q_n:=P_nA_nP_n\) from Section~\ref{sec: pauli structured data}.
For all tested sizes, the norm-aware diagnostic
\({\kappa_Q^{\mathrm{dir}}} / {\kappa_A^{\mathrm{dir}}}\) is below one, so the
classically formed symmetric product has a smaller effective QLS condition parameter
than the original matrix under the same norm-aware convention.
\section{Conclusion}
\label{sec: conclusion}

We have investigated quantum preconditioning under a Pauli-structured ansatz and obtained two complementary conclusions.
For standard QLS solvers based on block-encodings, the limitation result Theorem~\ref{thm:no-go-preconditioning} shows that separate block-encoding composition for $A$ and $P$ cannot reduce the contribution to the query complexity controlled by $\sigma_{\min}(A)$.
This clarifies a fundamental limitation of quantum preconditioning by oracle composition and explains why any genuine improvement for standard QLS solvers must arise from more structured access models, such as direct block-encodings of classically formed products.

Nevertheless, Pauli structure remains useful.
Under suitable commuting assumptions, Hamiltonian simulation combined with the matrix-logarithm construction yields explicit block-encodings of $P$, $PA$, and $PAP^\dagger$.
More significantly, for the randomized QLS solver in the randomized Pauli access model, the relevant complexity parameters are $w(PA)$ and $\sigma_{\min}(PA)$ rather than a block-encoding normalization factor; in that setting, Pauli-structured preconditioners can genuinely improve the dominant per-sample depth proxy.
Our numerical examples show that this per-sample advantage can be substantial even for a simple diagonal-term-dominated family, while remaining proof-of-principle numerical witnesses rather than scalable constructions; the full sample complexity also depends on the right-hand side normalization and on the chosen scalar functional.

Several directions remain open.
It would be valuable to characterize the regimes in which direct block-encodings of classically formed products can surpass the lower-bound mechanism for separate block-encoding composition identified here. Further open problems include the design of systematic optimization procedures for Pauli-structured preconditioners with small coefficient weight, the identification of polynomial-size local or translation-invariant Pauli ansatz families that retain the observed cancellations, and the connection between the Pauli coefficient weight proxy analyzed here and hardware-level gate cost models for concrete quantum architectures.

\begin{acknowledgments}
DA acknowledges funding from Quantum Science and Technology - National Science and Technology Major Project via Project 2024ZD0301900, and the support by The Fundamental Research Funds for the Central Universities, Peking University.
ZL acknowledges funding from the National Natural Science Foundation of China under Grant No. 12501419.
\end{acknowledgments}

\appendix

\section{Implementing the preconditioner via LCU}
\label{app: LCU implementation}

We now describe how the preconditioner~\eqref{eq: preconditioner}
admits a natural block-encoding via the linear combination of unitaries (LCU) framework~\cite{gilyen2019quantum,low2019hamiltonian,zhang2024circuit}, and how this leads to a complexity dependence on the Pauli coefficient weight $w(P)$.
In this section, we keep the standing assumption that $P$ is Hermitian, so the Pauli coefficients satisfy $\beta_m\in\mathbb{R}$.

The standard LCU framework is stated for nonnegative coefficients.
To handle general real coefficients, we absorb their signs into the Pauli words.
Write
\begin{equation}
  \beta_m = s_m \omega_m , \quad s_m \in \{\pm 1\}, \quad \omega_m = |\beta_m| \ge 0 .
\end{equation}
Then
$
 w(P) = \sum_{m=1}^{M} \omega_m
$.
Provided that $\beta \neq 0$, we introduce the probability weights
$
  \pi_m := \frac{\omega_m}{w(P)} , m = 1,\dots,M ,
$
and the sign-absorbed Pauli words
\begin{equation}
  \widetilde{P}_m := s_m P_m , \quad m = 1,\dots,M .
\end{equation}
Each $\widetilde{P}_m$ is again unitary and Hermitian, and the normalized operator is
$
  \frac{1}{w(P)} P = \sum_{m=1}^{M} \pi_m \widetilde{P}_m .
$

To construct a block-encoding of \(P\), we introduce an ancilla register \(\mathcal{H}_{\mathrm{anc}}\) of dimension at least \(M\), with computational basis \(\{ |m\rangle \}_{m=1}^{M}\).
The LCU construction uses two unitaries on the ancilla and system registers:
\begin{itemize}
  \item A state-preparation unitary \(U_{\mathrm{prep}}\) acting only on the ancilla such that
  \begin{equation}
    U_{\mathrm{prep}} |0\rangle
    = \sum_{m=1}^{M} \sqrt{\pi_m} \, |m\rangle
    = \sum_{m=1}^{M} \sqrt{\frac{|\beta_m|}{w(P)}} \, |m\rangle ,
  \end{equation}
  \item A SELECT oracle acting on ancilla and system as
  \begin{equation}
    \begin{aligned}
    \mathrm{SELECT}(P)
    &:= \sum_{m=1}^{M} |m\rangle \langle m| \otimes \widetilde{P}_m \\
    &= \sum_{m=1}^{M} |m\rangle \langle m| \otimes (s_m P_m) .
    \end{aligned}
  \end{equation}
\end{itemize}
Then the unitary $U_P$ is given by
  \begin{equation}
    \begin{aligned}
    U_P
    &:=
    (U_{\mathrm{prep}}^{\dagger} \otimes I)\,
    \mathrm{SELECT}(P)\,
    (U_{\mathrm{prep}} \otimes I).
    \end{aligned}
  \end{equation}
  This unitary is an \((w(P),1,0)\) block-encoding of the preconditioner \(P\).

Assume that we are given an \((\alpha_A,a_A,0)\) block-encoding \(U_A\) of the system matrix \(A\).
Up to trivial padding with identities on ancilla registers, the product unitary
$
 U_{PA} := U_P U_A
$
is then an \((w(P) \alpha_A,  a_A + 1, 0)\) block-encoding of the preconditioned operator \(P A\).

\section{Implementing the preconditioner via Hamiltonian simulation and matrix logarithm}
\label{app: hamiltonian simulation}

To interpret $P$ as a Hamiltonian, we assume in this subsection that $P$ is Hermitian.
Since each Pauli word $P_m$ is Hermitian, Hermiticity of $P$ requires $\beta_m \in \mathbb{R}$.
In this subsection, we implement the sum $P=\sum_{m=1}^M \beta_m P_m$ by first implementing $e^{itP}$ and then applying a matrix-logarithm construction.

\subsection{\texorpdfstring{Simulation of $U(t)=e^{itP}$ and its inverse}{Simulation of U(t)=exp(itP) and its inverse}}
Without loss of generality, we assume that the \(M\) listed coefficients are nonzero.
For a real time parameter \(t \in \mathbb{R}\), define the unitary
\begin{equation}
  U(t) := e^{itP} = \exp\!\left( i t \sum_{m=1}^{M} \beta_m P_m \right) .
\end{equation}
Set $H_m := i\beta_m P_m$ so that each $H_m$ is anti-Hermitian.
Define the $(p+1)$-fold commutator scaling
\begin{equation}
\label{eq:alpha-comm-scaling}
\tilde{\alpha}_{\mathrm{comm}}
:=\sum_{m_1,\ldots,m_{p+1}=1}^{M}
\left\|\left[H_{m_{p+1}},\left[\cdots\left[H_{m_2},H_{m_1}\right]\cdots\right]\right]\right\| .
\end{equation}
According to Corollary~7 of~\cite{childs2021theory}, for a product formula of order \(p\) of the form
$
\mathcal{S}(t) = \prod_{\upsilon=1}^{\Upsilon} \prod_{\gamma=1}^{M} \exp\!\bigl(t\, a(\upsilon,\gamma)\, H_{\pi_{\upsilon}(\gamma)}\bigr),
$
if we set
\begin{equation}
  r = \mathcal{O}\!\left(\frac{\tilde{\alpha}_{\mathrm{comm}}^{\frac{1}{p}}\, |t|^{1+\frac{1}{p}}}{\varepsilon_{\mathrm{exp}}^{\frac{1}{p}}}\right),
\end{equation}
then the simulation error satisfies
\begin{equation}
  \bigl\| \mathcal{S}^r\left(\frac{t}{r}\right) - U(t) \bigr\| \le \mathcal{O}(\varepsilon_{\mathrm{exp}}).
\end{equation}

For the inverse $U(t)^{\dagger}=U(-t)$, the same construction applies with $t \mapsto -t$ (equivalently $\beta_{\gamma} \mapsto -\beta_{\gamma}$).
We denote by $\widetilde{U}(t)$ the resulting approximate implementation of $U(t)$.

\subsection{Matrix-logarithm construction}
First, we state the following proposition, which improves the norm assumption of Corollary~71 of~\cite{gilyen2019quantum}.
\begin{proposition}
  \label{prop: log implementation}
  Let $\delta, \varepsilon \in\left(0, \frac{1}{2}\right]$.
Suppose that $U=e^{i H}$, where $H$ is a Hamiltonian of norm at most $\frac{\pi}{2}(1 - \delta)$.  Then, under our block-encoding normalization convention, we can implement a $\left(\frac{\pi}{2}, 2, \varepsilon\right)$-block-encoding of $H$ using $\mathcal{O}\left(\frac{1}{\delta^2}\log \left(\frac{1}{\varepsilon}\right)\right)$ calls to controlled-$U$ and its inverse, $\mathcal{O}\left(\frac{1}{\delta^2}\log \left(\frac{1}{\varepsilon}\right)\right)$ two-qubit gates, and a single ancilla qubit.
\end{proposition}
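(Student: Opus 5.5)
The plan follows the route of Corollary~71 of~\cite{gilyen2019quantum}: first block-encode $\sin H$ from controlled-$U$ and $U^\dagger$, then apply a QSVT odd-polynomial transformation that approximates $\frac{2}{\pi}\arcsin$ on the spectrum of $\sin H$. The only change from that corollary is a sharper handling of how close the spectrum of $\sin H$ comes to $\pm1$ under the norm assumption $\|H\|\le\frac{\pi}{2}(1-\delta)$.

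\textbf{Step 1 ($\sin H$).} With one ancilla qubit, use a Hadamard-type preparation, controlled-$U$ and controlled-$U^\dagger$, and a phase gate. This realizes the linear combination of unitaries
$$\tfrac{1}{2i}(U-U^\dagger)=\sin H,$$
which is a $(1,1,0)$ block-encoding of $\sin H$. It costs $\mathcal{O}(1)$ calls to controlled-$U$ and its inverse, plus $\mathcal{O}(1)$ two-qubit gates.

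\textbf{Step 2 (spectrum of $\sin H$).} Since $\|H\|\le\frac{\pi}{2}(1-\delta)$, every eigenvalue of $\sin H$ satisfies
$$|\sin\lambda|\le\sin\!\big(\tfrac{\pi}{2}(1-\delta)\big)=\cos\!\big(\tfrac{\pi\delta}{2}\big)\le 1-\eta,\qquad \eta:=c\,\delta^2,$$
for some absolute constant $c>0$. This uses $1-\cos x\ge 2x^2/\pi^2$ on $[0,\pi/2]$.

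\textbf{Step 3 (polynomial).} I will use the truncated Taylor series of $\arcsin$ rather than a generic approximation lemma, because it controls the sup-norm for free. Write
$$\arcsin x=\sum_{k\ge0}c_k x^{2k+1},\qquad c_k>0,\qquad \sum_{k\ge 0} c_k=\arcsin 1=\tfrac{\pi}{2}.$$
Set $p_K(x):=\frac{2}{\pi}\sum_{k\le K}c_kx^{2k+1}$. Then:
\begin{itemize}
\item $p_K$ is odd and real.
\item $|p_K(x)|\le\frac{2}{\pi}\sum_k c_k=1$ on $[-1,1]$, which is exactly the condition for QSVT implementability.
\item For $|x|\le 1-\eta$ the tail is bounded by
$$\Big|p_K(x)-\tfrac{2}{\pi}\arcsin x\Big|\le \tfrac{2}{\pi}\sum_{k>K}c_k(1-\eta)^{2k+1}\le(1-\eta)^{2K}.$$
\end{itemize}
Choosing $K=\mathcal{O}\!\big(\eta^{-1}\log\frac{1}{\varepsilon}\big)=\mathcal{O}\!\big(\delta^{-2}\log\frac1\varepsilon\big)$ makes this error at most $\frac{2}{\pi}\varepsilon$ on the spectrum of $\sin H$.

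\textbf{Step 4 (QSVT and bookkeeping).} Apply QSVT with one further ancilla qubit, for $a=2$ in total, to the $(1,1,0)$ block-encoding of $\sin H$ with the degree-$(2K+1)$ polynomial $p_K$. This yields a block-encoding of $p_K(\sin H)$. Since $H$ is Hermitian, $\arcsin(\sin H)=H$ by the functional calculus, because the spectrum of $H$ lies in $[-\pi/2,\pi/2]$. Hence
$$\big\|p_K(\sin H)-\tfrac{2}{\pi}H\big\|\le\tfrac{2}{\pi}\varepsilon,$$
which is precisely a $(\frac{\pi}{2},2,\varepsilon)$ block-encoding of $H$.

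The resource counts follow directly. QSVT uses $2K+1$ calls to the Step~1 unitary or its inverse, each costing $\mathcal{O}(1)$ controlled-$U^{\pm1}$ calls. It also uses $\mathcal{O}(K)$ controlled phase rotations, which are two-qubit gates. Both totals are $\mathcal{O}(\delta^{-2}\log\frac1\varepsilon)$, as claimed.

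\textbf{Main obstacle.} The delicate point is Step~3: obtaining a polynomial that is simultaneously bounded by $1$ on all of $[-1,1]$ and accurate on $[-(1-\eta),1-\eta]$, with degree only $\eta^{-1}\log\frac1\varepsilon$. The positivity of the $\arcsin$ Taylor coefficients, together with $\sum_k c_k=\pi/2$, resolves both requirements at once. A secondary check is the ancilla bookkeeping. The Step~1 ancilla and the QSVT ancilla are each a single qubit, so $a=2$ matches the paper's convention. The "single ancilla qubit" in the statement should be read as the one extra ancilla added by QSVT.
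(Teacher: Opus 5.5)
Your proposal is correct and follows essentially the same route as the paper, which is the Corollary~71 argument of~\cite{gilyen2019quantum}: block-encode $\sin H$ with one ancilla, observe that its spectrum stays $1-\cos(\pi\delta/2)=\Theta(\delta^2)$ away from $\pm1$, and apply real QSVT with an odd polynomial bounded by $1$ on $[-1,1]$ that approximates $\frac{2}{\pi}\arcsin$. The differences are minor. You build the polynomial explicitly by truncating the positive-coefficient Taylor series of $\arcsin$, which is essentially how the cited Lemma~70 is proved, rather than citing that lemma. You also allot $\frac{2}{\pi}\varepsilon$ to the polynomial error, so the final error on $H$ is $\varepsilon$ under the standard convention; the paper absorbs this constant factor into ``our block-encoding normalization convention.''
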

\begin{proof}
  Let $cU$ denote the controlled version of $U$ with a single control qubit.
  As in the original proof of Corollary~71 of~\cite{gilyen2019quantum}, one can express $\sin(H)$ as a matrix element of a
  constant-size circuit involving $cU$ and $cU^\dagger$:
  \[
  \sin(H)
  \;=\;
  -i\,(\langle +|\otimes I)\,cU^\dagger\,(ZX\otimes I)\,cU\,(|+\rangle\otimes I).
  \]
  Set $A:=\sin(H)$.  From $\|H\|\le \frac{\pi}{2}(1-\delta)$ and the fact that $\sin$ is increasing on $[0,\frac{\pi}{2}]$,
  we have
  $
  \|A\|
  \;\le\;
  \sin(\frac{\pi}{2}(1-\delta)).
  $
  Define the margin
  $
  \delta' \;:=\; 1-\sin(\frac{\pi}{2}(1-\delta))\in(0,1].
  $
  Then the spectrum of $A$ lies in $[-1+\delta',\,1-\delta']$.

  Now apply Lemma~70 with parameters $(\delta',\varepsilon)$.
  It gives an efficiently computable odd real polynomial $p\in\mathbb{R}[x]$ of degree
  $
  \deg(p) \;=\; \mathcal{O}\!\left(\frac{1}{\delta'}\log\!\left(\frac{1}{\varepsilon}\right)\right),
  $
  such that $\|p\|_{[-1,1]}\le 1$ and
  $
  \max_{x\in[-1+\delta',\,1-\delta']}
  \left|p(x)-\frac{2}{\pi}\arcsin(x)\right|
  \;\le\;
  \varepsilon.
  $
  Using quantum singular value transformation (QSVT; Corollary~18 in~\cite{gilyen2019quantum}) on the $(1,1,0)$-block-encoding of $A$
  with the polynomial $p$, we implement a $(1,2,\varepsilon)$-block-encoding of $p(A)$ using
  $\mathcal{O}(\deg(p))$ invocations of the block-encoding of $A$ and its inverse; hence using
  $\mathcal{O}(\deg(p))$ uses of controlled-$U$ and its inverse, and
  $\mathcal{O}(\deg(p))$ additional two-qubit gates.

  It remains to identify $p(A)$.  Since $\|H\|\le \frac{\pi}{2}(1-\delta) < \frac{\pi}{2}$, every eigenvalue $\lambda$ of $H$
  lies in $\left(-\frac{\pi}{2},\frac{\pi}{2}\right)$, and therefore $\arcsin(\sin(\lambda))=\lambda$. Hence,
  $
  \arcsin(A)
  \;=\;
  \arcsin(\sin(H))
  \;=\;
  H.
  $
  By functional calculus and the uniform approximation guarantee of $p$ on the spectrum of $A$,
  we obtain
  $
  \left\|p(A)-\frac{2}{\pi}\arcsin(A)\right\|
  \;\le\;
  \varepsilon.
  $
  Therefore,
  $
  \left\|p(\sin(H))-\frac{2}{\pi}H\right\|
  \;\le\;
  \varepsilon.
  $
  Thus the implemented unitary is a $(1,2,\varepsilon)$-block-encoding of $\left(\frac{2}{\pi}\right)H$, i.e., a
  $\left(\frac{\pi}{2},2,\varepsilon\right)$-block-encoding of $H$.

  Finally, to express the complexity in terms of \(\delta\), write
  \[
    \delta'
    =
    1-\sin\!\left(\frac{\pi}{2}(1-\delta)\right)
    =
    1-\cos\!\left(\frac{\pi\delta}{2}\right).
  \]
  For \(\delta\in(0,\frac{1}{2}]\), this margin satisfies
  \(\delta'=\Theta(\delta^2)\).  Therefore
  $
  \deg(p)
  \;=\;
  \mathcal{O}\!\left(\frac{1}{\delta^2}\log\!\left(\frac{1}{\varepsilon}\right)\right).
  $
  This completes the proof.
\end{proof}

Assume that we have an implementation of $U(t)$ and its inverse via the previous step.
Let $\delta \in \left(0, \frac{1}{2}\right]$.
We restrict the time parameter $t$ to the range
\begin{equation}
  \label{eq: time parameter range}
  0 < t \leq \frac{\pi (1-\delta)}{2\|P\|}.
\end{equation}

By Proposition~\ref{prop: log implementation}, for any $\varepsilon_{\mathrm{log}} \in\left(0, \frac{1}{2}\right]$, we can implement a $\left(\frac{\pi}{2}, 2, \varepsilon_{\mathrm{log}}\right)$-block-encoding of $(tP)$ using $\mathcal{O}\left(\frac{1}{\delta^2}\log \left(\frac{1}{\varepsilon_{\mathrm{log}}}\right)\right)$ calls to controlled-$U(t)$ and controlled-$U(t)^{\dagger}$, $\mathcal{O}\left(\frac{1}{\delta^2}\log \left(\frac{1}{\varepsilon_{\mathrm{log}}}\right)\right)$ two-qubit gates, and one ancilla qubit.

\paragraph{Cumulative error and total complexity.}
The construction in Corollary~71 of~\cite{gilyen2019quantum} has two stages.
First, one block-encodes $\sin(tP)$ using controlled-$U(t)$ and controlled-$U(t)^{\dagger}$.
Second, one applies a QSVT step with an \(\varepsilon_{\log}\)-approximating
polynomial for \(\frac{2}{\pi}\arcsin(x)\) on the spectral interval of
\(\sin(tP)\), which stays a distance \(\Theta(\delta^2)\) from \(\pm1\).
Let
\begin{equation}
  \begin{aligned}
  W_{\sin}
  &:= \sin(tP) \\
  &=
  -i\,(\langle +|\otimes I)\,cU(t)^{\dagger}(ZX\otimes I)\,cU(t)\,(|+\rangle\otimes I)
  \end{aligned}
\end{equation}
be the ideal block-encoding map for $\sin(tP)$, and let $\widetilde{W}_{\sin}$ be the same expression with $cU(t)$ and $cU(t)^{\dagger}$ replaced by $c\widetilde{U}(t)$ and $c\widetilde{U}(t)^{\dagger}$.
Since controlling does not change operator norm,
\begin{equation}
  \begin{aligned}
  \|c\widetilde{U}(t) - cU(t)\|
  &= \|\widetilde{U}(t) - U(t)\|
  \le \varepsilon_{\mathrm{exp}}, \\
  \|c\widetilde{U}(t)^{\dagger} - cU(t)^{\dagger}\|
  &\le \varepsilon_{\mathrm{exp}}.
  \end{aligned}
\end{equation}
A direct telescoping bound using the fact that all other factors in $W_{\sin}$ are unitary gives
\begin{equation}
  \begin{aligned}
  \|\widetilde{W}_{\sin} - W_{\sin}\|
  &\le
  \|c\widetilde{U}(t)^{\dagger} - cU(t)^{\dagger}\| \\
  &\quad + \|c\widetilde{U}(t) - cU(t)\|
  \le 2\varepsilon_{\mathrm{exp}}.
  \end{aligned}
\end{equation}
Thus the implemented block-encoding of $\sin(tP)$ has intrinsic error
$
  \varepsilon_{\sin} = \mathcal{O}(\varepsilon_{\mathrm{exp}}).
$

Let $q$ denote the degree (and query complexity) of this QSVT polynomial, so that
$
  q = \mathcal{O}\!\left(\frac{1}{\delta^2}\log\!\left(\tfrac{1}{\varepsilon_{\log}}\right)\right).
$
Robustness of QSVT implies that when the signal unitary (here, the block-encoding for $\sin(tP)$) has error $\varepsilon_{\sin}$, the induced error after a degree-$q$ transformation accumulates at most linearly in $q$.
Consequently the overall error of the resulting $\left(\tfrac{\pi}{2},2,\varepsilon_{\mathrm{tot}}\right)$-block-encoding of $tP$ satisfies
\begin{equation}
  \varepsilon_{\mathrm{tot}} \le \varepsilon_{\log} + \mathcal{O}(q\,\varepsilon_{\sin})
  \le \varepsilon_{\log} + \mathcal{O}(q\,\varepsilon_{\mathrm{exp}}).
\end{equation}
This block-encoding is equivalent to a $\left(\frac{\pi}{2 t}, 2, \varepsilon_{\mathrm{tot}}\right)$-block-encoding of $P$.

Each call to $\widetilde{U}(t)$ (or its inverse) uses $\mathcal{O}(r\,\Upsilon\,M)$ Pauli exponentials, and the controlled versions incur only constant-factor overhead.
Implementing one use of the $\sin$ block-encoding uses two controlled calls (one to $\widetilde{U}(t)$ and one to $\widetilde{U}(t)^{\dagger}$), so the overall Pauli exponential count remains
$
  \mathcal{O}\bigl(q\, r\, \Upsilon\,M\bigr)
$
up to constant factors, plus $\mathcal{O}(q)$ additional two-qubit gates and a single ancilla qubit as in Corollary~71 of~\cite{gilyen2019quantum}.
For the Suzuki formula $S_{2k}(t)$ (order $p=2k$), one has $\Upsilon = 2 \cdot 5^{k-1} = 2 \cdot 5^{\frac{p}{2} - 1}$.

Let $C_0$ denote the absolute constant in the QSVT robustness bound
$\varepsilon_{\mathrm{tot}} \le \varepsilon_{\log} + C_0\, q\, \varepsilon_{\sin}$
(see~\cite{gilyen2019quantum}).
Set
$
  \varepsilon_{\log} = \tfrac{1}{2}\varepsilon_{\mathrm{tot}},
  \qquad
  \varepsilon_{\mathrm{exp}} = \tfrac{\varepsilon_{\mathrm{tot}}}{4C_0 q}.
$
Since $\varepsilon_{\sin} = \mathcal{O}(\varepsilon_{\mathrm{exp}})$ from the telescoping bound above,
$C_0\,q\,\varepsilon_{\sin} \le \tfrac{1}{2}\varepsilon_{\mathrm{tot}}$,
and therefore
$\varepsilon_{\mathrm{tot}} \le \varepsilon_{\log} + C_0\,q\,\varepsilon_{\sin} \le \tfrac{1}{2}\varepsilon_{\mathrm{tot}} + \tfrac{1}{2}\varepsilon_{\mathrm{tot}} = \varepsilon_{\mathrm{tot}}$,
confirming the allocation is consistent.
Substituting this choice into the Step~1 bound on $r$ gives
\begin{equation}
  r = \mathcal{O}\!\left(\tilde{\alpha}_{\mathrm{comm}}^{\frac{1}{p}}\, t^{1+\frac{1}{p}}\,\left(\frac{q}{\varepsilon_{\mathrm{tot}}}\right)^{\frac{1}{p}}\right),
\end{equation}
and therefore the total Pauli exponential complexity becomes
\begin{equation}
  \label{eq: complexity of P}
  \begin{aligned}
    \mathrm{T}_{P}
    &= \mathcal{O}\!\left(
      \Upsilon\,M\,\tilde{\alpha}_{\mathrm{comm}}^{\frac{1}{p}}\,
      t^{1+\frac{1}{p}}\,
      \frac{q^{1+\frac{1}{p}}}{\varepsilon_{\mathrm{tot}}^{\frac{1}{p}}}
    \right) \\
	    &= \mathcal{O}\!\left(
	      \Upsilon\,M\,\tilde{\alpha}_{\mathrm{comm}}^{\frac{1}{p}}\,
	      t^{1+\frac{1}{p}}\,
	      \frac{\bigl(\log(\frac{1}{\varepsilon_{\mathrm{tot}}})\bigr)^{1+\frac{1}{p}}}
             {\delta^{2(1+\frac{1}{p})}\varepsilon_{\mathrm{tot}}^{\frac{1}{p}}}
	    \right) \\
	    &= \widetilde{O}\left(
	      \Upsilon\,M\,\tilde{\alpha}_{\mathrm{comm}}^{\frac{1}{p}}\,
	      t^{1+\frac{1}{p}}\,
        \delta^{-2(1+\frac{1}{p})}\,
	      \frac{1}{\varepsilon_{\mathrm{tot}}^{\frac{1}{p}}}
	    \right)
  \end{aligned}
\end{equation}
By taking the maximum allowed value of $t$ as $t = \frac{\pi (1-\delta)}{2\|P\|}$, we have
\begin{equation}
	  \mathrm{T}_P = \widetilde{O}\left(
	    \Upsilon\,M\,\tilde{\alpha}_{\mathrm{comm}}^{\frac{1}{p}}\, \left(\frac{\pi (1-\delta)}{2\|P\|}\right)^{1+\frac{1}{p}}\,
      \delta^{-2(1+\frac{1}{p})}\,
	    \frac{1}{\varepsilon_{\mathrm{tot}}^{\frac{1}{p}}}
	  \right)
\end{equation}

\section{Computation of the preconditioner coefficients}
\label{app: computation preconditioner coefficients}

We now describe how the Pauli coefficients used in the numerical experiment are
computed.  Fix an \(n\)-qubit instance \(A_n\) and a finite Pauli ansatz
\begin{equation}
  \label{eq: coefficient computation ansatz}
  P(\beta)=\sum_{j=1}^{M_n}\beta_j S_j,
  \qquad
  \beta_j\in\mathbb{R},
\end{equation}
where, in the experiment of Table~\ref{tab: randomized numerical witness}, the
basis is the diagonal Pauli family
\begin{equation}
  \label{eq: coefficient computation diagonal basis}
  \mathcal{B}_n=\{S_j\}_{j=1}^{M_n}=\{I,Z\}^{\otimes n}.
\end{equation}
The coefficients are chosen by fitting \(P(\beta)A_n\) to the identity.  After
vectorization, define
\begin{equation}
  \label{eq: coefficient computation feature matrix}
  F_n =
  \begin{bmatrix}
    \operatorname{vec}(S_1A_n) & \cdots & \operatorname{vec}(S_{M_n}A_n)
  \end{bmatrix},
  \qquad
  e_n=\operatorname{vec}(I).
\end{equation}
For real coefficients, if a general Pauli ansatz produces complex entries, one
may replace \(F_n\beta\approx e_n\) by its realification
\[
  \begin{bmatrix}\operatorname{Re}F_n\\ \operatorname{Im}F_n\end{bmatrix}\beta
  \approx
  \begin{bmatrix}e_n\\0\end{bmatrix}.
\]
In the diagonal ansatz used for Table~\ref{tab: randomized numerical witness},
the matrices \(S_jA_n\) are real, so this realification reduces to the real
least-squares system.

Candidate coefficient vectors are obtained from the approximate inverse
problem
\begin{equation}
  \label{eq: coefficient computation lasso}
  \widehat{\beta}_{\lambda}
  \in
  \argmin_{\beta\in\mathbb{R}^{M_n}}
  \frac{1}{2}\|F_n\beta-e_n\|_2^2+\lambda\|\beta\|_1.
\end{equation}
This is a LASSO objective~\cite{tibshirani1996regression}.  The unregularized case \(\lambda=0\) is the ordinary least-squares fit.  For
\(\lambda>0\), the \(\ell_1\) term promotes a Pauli expansion with smaller
coefficient weight and can be solved by the standard accelerated proximal
gradient method, also known as FISTA~\cite{beck2009fast}; see Algorithm~\ref{alg: coefficient computation fista}.  If \(L_F=\|F_n\|_2^2\), the smooth part
\(\frac12\|F_n\beta-e_n\|_2^2\) has \(L_F\) Lipschitz gradient
\[
  \nabla f(\beta)=F_n^{T}(F_n\beta-e_n).
\]
The proximal step is coordinatewise soft thresholding:
\begin{equation}
  \label{eq: coefficient computation soft threshold}
  \mathcal{S}_{\tau}(x)_j
  =
  \operatorname{sign}(x_j)\max\{|x_j|-\tau,0\}.
\end{equation}

\begin{figure*}[t]
\begingroup
\renewcommand{\figurename}{Algorithm}
\caption{Coefficient computation for Pauli-structured preconditioners}
\label{alg: coefficient computation fista}
\centering
\setlength{\fboxsep}{6pt}
\setlength{\fboxrule}{0.5pt}
\fcolorbox{black!55}{white}{
\begin{minipage}{0.94\textwidth}
\small
\vspace{0.3ex}
\begin{algorithmic}[1]
\REQUIRE Pauli basis \(\mathcal{B}_n=\{S_j\}_{j=1}^{M_n}\), matrix \(A_n\), regularization \(\lambda\ge0\), optional positive display scale \(\gamma\), iteration budget \(K\)
\ENSURE A Pauli-structured preconditioner \(P_n\)
\STATE Form \(F_n=[\operatorname{vec}(S_1A_n)\ \cdots\ \operatorname{vec}(S_{M_n}A_n)]\) and \(e_n=\operatorname{vec}(I)\).
\IF{\(\lambda=0\)}
  \STATE Compute \(\widehat{\beta}\in\argmin_\beta \|F_n\beta-e_n\|_2^2\) by a least-squares solve.
\ELSE
  \STATE Set \(\beta^{0}=z^{0}=0\), \(\theta_0=1\), and \(L_F=\|F_n\|_2^2\).
  \FOR{\(k=0,1,\ldots,K-1\)}
    \STATE \(g^k=F_n^T(F_nz^k-e_n)\).
    \STATE \(\beta^{k+1}=\mathcal{S}_{\frac{\lambda}{L_F}}(z^k-\frac{g^k}{L_F})\).
    \STATE \(\theta_{k+1}=\frac{1+\sqrt{1+4\theta_k^2}}{2}\).
    \STATE \(z^{k+1}=\beta^{k+1}+\frac{\theta_k-1}{\theta_{k+1}}(\beta^{k+1}-\beta^k)\).
  \ENDFOR
  \STATE Set \(\widehat{\beta}=\beta^K\).
\ENDIF
\STATE Form \(\widehat P=\sum_j\widehat{\beta}_jS_j\), regroup \(\widehat PA_n\), and compute the depth ratio \(R(\widehat P)\).
\STATE Optionally rescale for reporting by setting \(P_n=\gamma\widehat P\) with \(\gamma>0\); the reported ratios are unchanged by this positive rescaling.
\STATE Return \(P_n\) together with the directly evaluated diagnostic \(R(P_n)\).
\end{algorithmic}
\vspace{0.3ex}
\end{minipage}
}
\endgroup
\end{figure*}

The diagnostic evaluation step is important.  The optimization objective
\eqref{eq: coefficient computation lasso} is used only to construct candidate
preconditioners; the improvement reported in Table~\ref{tab: randomized numerical witness}
is certified afterwards by direct evaluation of \(w(P_nA_n)\),
\(\sigma_{\min}(P_nA_n)\), and
\[
  R_n=
  \sqrt{\frac{D_{\mathrm{RQLS}}(P_nA_n)}{D_{\mathrm{RQLS}}(A_n)}}
  =
  \frac{w(P_nA_n)/\sigma_{\min}(P_nA_n)}
       {w(A_n)/\sigma_{\min}(A_n)}.
\]
For the table, the listed \(P_n\)'s come from the least-squares branch
\(\lambda=0\), followed by a fixed positive rescaling used only to set a
convenient coefficient scale.  This rescaling is not a screening step: the
ratios \(R_n\), \(\frac{\kappa_Q^{\mathrm{dir}}}{\kappa_A^{\mathrm{dir}}}\), and
\(\frac{\kappa_Q^{\mathrm{dir}}}{\kappa_{\mathrm{sep}}^{\mathrm{sym}}}\) are invariant
under \(P_n\mapsto \gamma P_n\) for \(\gamma>0\).  The same coefficients are
then reused, without reoptimization, to form \(Q_n=P_nA_nP_n\) for the
symmetric direct block-encoding comparison.

\bibliography{ref}

@article{harrow2009quantum,
  title         = {Quantum Algorithm for Linear Systems of Equations},
  author        = {Harrow, Aram W. and Hassidim, Avinatan and Lloyd, Seth},
  journal       = {Physical Review Letters},
  volume        = {103},
  number        = {15},
  pages         = {150502},
  year          = {2009},
  doi           = {10.1103/PhysRevLett.103.150502}
}

@article{childs2017quantum,
  title         = {Quantum Algorithm for Systems of Linear Equations with Exponentially Improved Dependence on Precision},
  author        = {Childs, Andrew M. and Kothari, Robin and Somma, Rolando D.},
  journal       = {SIAM Journal on Computing},
  volume        = {46},
  number        = {6},
  pages         = {1920--1950},
  year          = {2017},
  doi           = {10.1137/16M1087072}
}

@inproceedings{gilyen2019quantum,
  title         = {Quantum Singular Value Transformation and Beyond: Exponential Improvements for Quantum Matrix Arithmetics},
  author        = {Gily{\'e}n, Andr{\'a}s and Su, Yuan and Low, Guang Hao and Wiebe, Nathan},
  booktitle     = {Proceedings of the 51st Annual ACM SIGACT Symposium on Theory of Computing},
  pages         = {193--204},
  year          = {2019},
  doi           = {10.1145/3313276.3316366}
}

@book{saad2003iterative,
  title         = {Iterative Methods for Sparse Linear Systems},
  author        = {Saad, Yousef},
  edition       = {2},
  publisher     = {SIAM},
  address       = {Philadelphia},
  year          = {2003},
  doi           = {10.1137/1.9780898718003}
}

@article{benzi2002preconditioning,
  title         = {Preconditioning Techniques for Large Linear Systems: A Survey},
  author        = {Benzi, Michele},
  journal       = {Journal of Computational Physics},
  volume        = {182},
  number        = {2},
  pages         = {418--477},
  year          = {2002},
  doi           = {10.1006/jcph.2002.7176}
}

@article{grote1997parallel,
  title         = {Parallel Preconditioning with Sparse Approximate Inverses},
  author        = {Grote, Marcus J. and Huckle, Thomas},
  journal       = {SIAM Journal on Scientific Computing},
  volume        = {18},
  number        = {3},
  pages         = {838--853},
  year          = {1997},
  doi           = {10.1137/S1064827594276552}
}

@article{clader2013preconditioned,
  title         = {Preconditioned Quantum Linear System Algorithm},
  author        = {Clader, B. David and Jacobs, Bryan C. and Sprouse, Chad R.},
  journal       = {Physical Review Letters},
  volume        = {110},
  number        = {25},
  pages         = {250504},
  year          = {2013},
  doi           = {10.1103/PhysRevLett.110.250504}
}

@article{shao2018quantum,
  title         = {Quantum Circulant Preconditioner for a Linear System of Equations},
  author        = {Shao, Changpeng and Xiang, Hua},
  journal       = {Physical Review A},
  volume        = {98},
  number        = {6},
  pages         = {062321},
  year          = {2018},
  doi           = {10.1103/PhysRevA.98.062321}
}

@article{wan2018asymptotic,
  title         = {Asymptotic Quantum Algorithm for the {Toeplitz} Systems},
  author        = {Wan, Lin-Chun and Yu, Chao-Hua and Pan, Shi-Jie and Gao, Fei and Wen, Qiao-Yan and Qin, Su-Juan},
  journal       = {Physical Review A},
  volume        = {97},
  number        = {6},
  pages         = {062322},
  year          = {2018},
  doi           = {10.1103/PhysRevA.97.062322}
}

@article{tong2021fast,
  title         = {Fast Inversion, Preconditioned Quantum Linear System Solvers, and Fast Evaluation of Matrix Functions},
  author        = {Tong, Yimin and An, Dong and Wiebe, Nathan and Lin, Lin},
  journal       = {Physical Review A},
  volume        = {104},
  number        = {3},
  pages         = {032422},
  year          = {2021},
  doi           = {10.1103/PhysRevA.104.032422}
}

@article{golden2022quantum,
  title         = {Quantum Computing and Preconditioners for Hydrological Linear Systems},
  author        = {Golden, Jacob K. and O'Malley, Daniel and Viswanathan, Hari S.},
  journal       = {Scientific Reports},
  volume        = {12},
  number        = {1},
  pages         = {22285},
  year          = {2022},
  doi           = {10.1038/s41598-022-25727-9}
}

@misc{hosaka2023preconditioning,
  title         = {Preconditioning for a Variational Quantum Linear Solver},
  author        = {Hosaka, Aruto and Yanagisawa, Koichi and Koshikawa, Shota and Kudo, Isamu and Alifu, Xiafukaiti and Yoshida, Tsuyoshi},
  year          = {2023},
  eprint        = {2312.15657},
  archivePrefix = {arXiv},
  doi           = {10.48550/arXiv.2312.15657}
}

@article{wu2024preconditioned,
  title         = {A Preconditioned Inexact Infeasible Quantum Interior Point Method for Linear Optimization},
  author        = {Wu, Zeguan and Yang, Xiu and Terlaky, Tam{\'a}s},
  journal       = {Computational Optimization and Applications},
  volume        = {93},
  number        = {3},
  pages         = {1225--1257},
  year          = {2026},
  doi           = {10.1007/s10589-025-00750-4}
}

@book{nielsen2010quantum,
  title         = {Quantum Computation and Quantum Information},
  author        = {Nielsen, Michael A. and Chuang, Isaac L.},
  edition       = {10th anniversary},
  publisher     = {Cambridge University Press},
  address       = {Cambridge},
  year          = {2010},
  doi           = {10.1017/CBO9780511976667}
}

@article{tibshirani1996regression,
  title         = {Regression Shrinkage and Selection via the {LASSO}},
  author        = {Tibshirani, Robert},
  journal       = {Journal of the Royal Statistical Society. Series B},
  volume        = {58},
  number        = {1},
  pages         = {267--288},
  year          = {1996},
  doi           = {10.1111/j.2517-6161.1996.tb02080.x}
}

@article{beck2009fast,
  title         = {A Fast Iterative Shrinkage-Thresholding Algorithm for Linear Inverse Problems},
  author        = {Beck, Amir and Teboulle, Marc},
  journal       = {SIAM Journal on Imaging Sciences},
  volume        = {2},
  number        = {1},
  pages         = {183--202},
  year          = {2009},
  doi           = {10.1137/080716542}
}

@article{phillips2021quantum,
  title         = {{QW-HHL}: A Quantum Computer Amenable General Matrix Equation Solver},
  author        = {Phillips, Christopher D. and Akbari, Helia and Okhmatovski, Vladimir I.},
  journal       = {{IEEE} Journal on Multiscale and Multiphysics Computational Techniques},
  volume        = {11},
  pages         = {156--178},
  year          = {2026},
  doi           = {10.1109/JMMCT.2026.3678506}
}

@misc{jin2025schrodingerization,
  title         = {Quantum Preconditioning Method for Linear Systems Problems via {Schr{\"o}dingerization}},
  author        = {Jin, Shi and Liu, Nana and Ma, Chuwen and Yu, Yue},
  year          = {2025},
  eprint        = {2505.06866},
  archivePrefix = {arXiv},
  doi           = {10.48550/arXiv.2505.06866}
}

@article{shayegan2025quasi,
  title         = {Quantum Linear System Algorithm for Solving an Ill-Posed Quasi-Linear Elliptic Problem by Preconditioning Operator},
  author        = {Salehi Shayegan, Amir Hossein and Dejam, Laya},
  journal       = {The European Physical Journal Plus},
  volume        = {140},
  number        = {5},
  pages         = {467},
  year          = {2025},
  doi           = {10.1140/epjp/s13360-025-06383-0}
}

@article{suresh2023sparse,
  title         = {Computing a Sparse Approximate Inverse on Quantum Annealing Machines},
  author        = {Suresh, Sanjay and Suresh, Krishnan},
  journal       = {Journal of Computing and Information Science in Engineering},
  volume        = {26},
  number        = {1},
  pages         = {011008},
  year          = {2026},
  doi           = {10.1115/1.4070578}
}

@article{raisuddin2024review,
  title         = {A Review of Quantum Scientific Computing Algorithms Relevant to Computational Mechanics},
  author        = {Raisuddin, Osama Muhammad and De, Suvranu},
  journal       = {Archives of Computational Methods in Engineering},
  volume        = {33},
  number        = {1},
  pages         = {745--787},
  year          = {2026},
  doi           = {10.1007/s11831-025-10321-9}
}

@article{morales2024quantum,
  title         = {Quantum Linear System Solvers: A Survey of Algorithms and Applications},
  author        = {Morales, Mauro E. S. and Pira, Lirand{\"e} and Schleich, Philipp and Koor, Kelvin and Costa, Pedro and An, Dong and Aspuru-Guzik, Al{\'a}n and Lin, Lin and Rebentrost, Patrick and Berry, Dominic W.},
  journal       = {Reviews of Modern Physics},
  year          = {2026},
  doi           = {10.1103/x6gh-d8gh}
}

@article{childs2021theory,
  title         = {Theory of {Trotter} Error with Commutator Scaling},
  author        = {Childs, Andrew M. and Su, Yuan and Tran, Minh C. and Wiebe, Nathan and Zhu, Shuchen},
  journal       = {Physical Review X},
  volume        = {11},
  number        = {1},
  pages         = {011020},
  year          = {2021},
  doi           = {10.1103/PhysRevX.11.011020}
}

@article{wang2024qubit,
  title         = {Qubit-Efficient Randomized Quantum Algorithms for Linear Algebra},
  author        = {Wang, Samson and McArdle, Sam and Berta, Mario},
  journal       = {{PRX} Quantum},
  volume        = {5},
  number        = {2},
  pages         = {020324},
  year          = {2024},
  doi           = {10.1103/PRXQuantum.5.020324}
}

@article{lapworth2025preconditioned,
  author  = {Leigh Lapworth and Christoph S{\"u}nderhauf},
  title   = {Preconditioned block encodings for quantum linear systems},
  journal = {Quantum Science and Technology},
  volume  = {10},
  number  = {4},
  pages   = {045064},
  year    = {2025},
  doi     = {10.1088/2058-9565/ae0f4b}
}

@article{pechan2025block,
  title         = {Block Encoding of the Three-Dimensional Heterogeneous {Poisson} Equation with Application to Fracture Flow},
  author        = {Pechan, Austin and Golden, John and O'Malley, Daniel},
  journal       = {Physical Review Applied},
  volume        = {25},
  number        = {4},
  pages         = {044038},
  year          = {2026},
  doi           = {10.1103/mx5c-8vqk}
}

@article{childs2021high,
  title         = {High-Precision Quantum Algorithms for Partial Differential Equations},
  author        = {Childs, Andrew M. and Liu, Jin-Peng and Ostrander, Aaron},
  journal       = {Quantum},
  volume        = {5},
  pages         = {574},
  year          = {2021},
  doi           = {10.22331/q-2021-11-10-574}
}

@article{sato2024hamiltonian,
  title         = {Hamiltonian Simulation for Hyperbolic Partial Differential Equations by Scalable Quantum Circuits},
  author        = {Sato, Yuki and Kondo, Ruho and Hamamura, Ikko and Onodera, Tamiya and Yamamoto, Naoki},
  journal       = {Physical Review Research},
  volume        = {6},
  number        = {3},
  pages         = {033246},
  year          = {2024},
  doi           = {10.1103/PhysRevResearch.6.033246}
}

@misc{sturm2025efficient,
  title         = {Efficient and Explicit Block Encoding of Finite Difference Discretizations of the {Laplacian}},
  author        = {Sturm, Andreas and Schillo, Niclas},
  year          = {2025},
  eprint        = {2509.02429},
  archivePrefix = {arXiv},
  doi           = {10.48550/arXiv.2509.02429}
}

@article{lloyd1996universal,
  title         = {Universal Quantum Simulators},
  author        = {Lloyd, Seth},
  journal       = {Science},
  volume        = {273},
  number        = {5278},
  pages         = {1073--1078},
  year          = {1996},
  doi           = {10.1126/science.273.5278.1073}
}

@article{somma2002simulating,
  title         = {Simulating Physical Phenomena by Quantum Networks},
  author        = {Somma, Rolando and Ortiz, Gerardo and Gubernatis, James E. and Knill, Emanuel and Laflamme, Raymond},
  journal       = {Physical Review A},
  volume        = {65},
  number        = {4},
  pages         = {042323},
  year          = {2002},
  doi           = {10.1103/PhysRevA.65.042323}
}

@article{jordan1928uber,
  title         = {{\"U}ber das {Paulische} {{\"A}quivalenzverbot}},
  author        = {Jordan, Pascual and Wigner, Eugene},
  journal       = {Zeitschrift f{\"u}r Physik},
  volume        = {47},
  number        = {9},
  pages         = {631--651},
  year          = {1928},
  doi           = {10.1007/BF01331938}
}

@article{mcardle2020quantum,
  title         = {Quantum Computational Chemistry},
  author        = {McArdle, Sam and Endo, Suguru and Aspuru-Guzik, Al{\'a}n and Benjamin, Simon C. and Yuan, Xiao},
  journal       = {Reviews of Modern Physics},
  volume        = {92},
  number        = {1},
  pages         = {015003},
  year          = {2020},
  doi           = {10.1103/RevModPhys.92.015003}
}

@article{cao2019quantum,
  title         = {Quantum Chemistry in the Age of Quantum Computing},
  author        = {Cao, Yudong and Romero, Jonathan and Olson, Jonathan P. and Degroote, Matthias and Johnson, Peter D. and Kieferov{\'a}, M{\'a}ria and Kivlichan, Ian D. and Menke, Tim and Peropadre, Borja and Sawaya, Nicolas P. D. and Sim, Sukin and Veis, Libor and Aspuru-Guzik, Al{\'a}n},
  journal       = {Chemical Reviews},
  volume        = {119},
  number        = {19},
  pages         = {10856--10915},
  year          = {2019},
  doi           = {10.1021/acs.chemrev.8b00803}
}

@article{sunderhauf2024blockencoding,
  title   = {Block-Encoding Structured Matrices for Data Input in Quantum Computing},
  author  = {S{\"u}nderhauf, Christoph and Campbell, Earl and Camps, Joan},
  journal = {Quantum},
  volume  = {8},
  pages   = {1226},
  year    = {2024},
  doi     = {10.22331/q-2024-01-11-1226}
}

@article{camps2024explicit,
  title   = {Explicit Quantum Circuits for Block Encodings of Certain Sparse Matrices},
  author  = {Camps, Daan and Lin, Lin and Van Beeumen, Roel and Yang, Chao},
  journal = {SIAM Journal on Matrix Analysis and Applications},
  volume  = {45},
  number  = {1},
  pages   = {801--827},
  year    = {2024},
  doi     = {10.1137/22M1484298}
}

@article{zhang2024circuit,
  title   = {Circuit Complexity of Quantum Access Models for Encoding Classical Data},
  author  = {Zhang, Xiao-Ming and Yuan, Xiao},
  journal = {npj Quantum Information},
  volume  = {10},
  pages   = {42},
  year    = {2024},
  doi     = {10.1038/s41534-024-00835-8}
}

@article{low2019hamiltonian,
  title   = {Hamiltonian Simulation by Qubitization},
  author  = {Low, Guang Hao and Chuang, Isaac L.},
  journal = {Quantum},
  volume  = {3},
  pages   = {163},
  year    = {2019},
  doi     = {10.22331/q-2019-07-12-163}
}

@misc{schillo2026block,
  title         = {{TARE}: Block Encoding Linear Combinations of {Pauli} Strings Without Ancilla State Preparation},
  author        = {Schillo, Niclas and Sturm, Andreas and Quay, R{\"u}diger},
  year          = {2026},
  eprint        = {2601.05740},
  archivePrefix = {arXiv},
  doi           = {10.48550/arXiv.2601.05740}
}

@misc{hariprakash2025randomized,
  title         = {Are Randomized Quantum Linear Systems Solvers Practical?},
  author        = {Hariprakash, Siddharth and Van Beeumen, Roel and Klymko, Katherine and Camps, Daan},
  year          = {2025},
  eprint        = {2510.13766},
  archivePrefix = {arXiv},
  doi           = {10.48550/arXiv.2510.13766}
}

@article{berry2015simulating,
  title         = {Simulating {Hamiltonian} Dynamics with a Truncated {Taylor} Series},
  author        = {Berry, Dominic W. and Childs, Andrew M. and Cleve, Richard and Kothari, Robin and Somma, Rolando D.},
  journal       = {Physical Review Letters},
  volume        = {114},
  pages         = {090502},
  year          = {2015},
  doi           = {10.1103/PhysRevLett.114.090502}
}

@article{low2017optimal,
  title         = {Optimal {Hamiltonian} Simulation by Quantum Signal Processing},
  author        = {Low, Guang Hao and Chuang, Isaac L.},
  journal       = {Physical Review Letters},
  volume        = {118},
  pages         = {010501},
  year          = {2017},
  doi           = {10.1103/PhysRevLett.118.010501}
}

@article{lin2020optimal,
  title         = {Optimal polynomial based quantum eigenstate filtering with application to solving quantum linear systems},
  author        = {Lin, Lin and Tong, Yu},
  journal       = {Quantum},
  volume        = {4},
  pages         = {361},
  year          = {2020},
  doi           = {10.22331/q-2020-11-11-361}
}

@article{costa2022optimal,
  title         = {Optimal Scaling Quantum Linear-Systems Solver via Discrete Adiabatic Theorem},
  author        = {Costa, Pedro C. S. and An, Dong and Sanders, Yuval R. and Su, Yuan and Babbush, Ryan and Berry, Dominic W.},
  journal       = {{PRX} Quantum},
  volume        = {3},
  pages         = {040303},
  year          = {2022},
  doi           = {10.1103/PRXQuantum.3.040303}
}

@article{martyn2021grand,
  title         = {Grand Unification of Quantum Algorithms},
  author        = {Martyn, John M. and Rossi, Zane M. and Tan, Andrew K. and Chuang, Isaac L.},
  journal       = {{PRX} Quantum},
  volume        = {2},
  pages         = {040203},
  year          = {2021},
  doi           = {10.1103/PRXQuantum.2.040203}
}

@article{babbush2019quantum,
  title         = {Quantum Simulation of Chemistry with Sublinear Scaling in Basis Size},
  author        = {Babbush, Ryan and Berry, Dominic W. and McClean, Jarrod R. and Neven, Hartmut},
  journal       = {npj Quantum Information},
  volume        = {5},
  pages         = {92},
  year          = {2019},
  doi           = {10.1038/s41534-019-0199-y}
}

@article{chakraborty2024lcu,
  author        = {Chakraborty, Shantanav},
  title         = {Implementing any {Linear Combination of Unitaries} on Intermediate-term Quantum Computers},
  journal       = {Quantum},
  volume        = {8},
  pages         = {1496},
  year          = {2024},
  doi           = {10.22331/q-2024-10-10-1496},
  eprint        = {2302.13555},
  archivePrefix = {arXiv},
  primaryClass  = {quant-ph}
}

@misc{chakraborty2025qsvtwithout,
  author        = {Chakraborty, Shantanav and Hazra, Soumyabrata and Li, Tongyang and Shao, Changpeng and Wang, Xinzhao and Zhang, Yuxin},
  title         = {{Quantum Singular Value Transformation} without Block Encodings: Near-optimal Complexity with Minimal Ancilla},
  year          = {2025},
  eprint        = {2504.02385},
  archivePrefix = {arXiv},
  primaryClass  = {quant-ph}
}

@misc{wang2025randomizedqsvt,
  author        = {Wang, Xinzhao and Zhang, Yuxin and Hazra, Soumyabrata and Li, Tongyang and Shao, Changpeng and Chakraborty, Shantanav},
  title         = {Randomized {Quantum Singular Value Transformation}},
  year          = {2025},
  eprint        = {2510.06851},
  archivePrefix = {arXiv},
  primaryClass  = {quant-ph}
}

\end{document}